\newcommand{\bA}{\boldsymbol{A}}
\newcommand{\bB}{\boldsymbol{B}}
\newcommand{\bC}{\boldsymbol{C}}
\newcommand{\bF}{\boldsymbol{F}}
\newcommand{\bh}{\boldsymbol{h}}
\newcommand{\bH}{\boldsymbol{H}}
\newcommand{\bi}{\boldsymbol{i}}
\newcommand{\bj}{\boldsymbol{j}}
\newcommand{\bk}{\boldsymbol{k}}
\newcommand{\bone}{\boldsymbol{1}}
\newcommand{\bP}{\boldsymbol{P}}
\newcommand{\bq}{\boldsymbol{q}}
\newcommand{\bQ}{\boldsymbol{Q}}
\newcommand{\bS}{\boldsymbol{S}}
\newcommand{\bT}{\boldsymbol{T}}
\newcommand{\bV}{\boldsymbol{V}}
\newcommand{\bW}{\boldsymbol{W}}
\newcommand{\bX}{\boldsymbol{X}}
\newcommand{\bzD}{\boldsymbol{\Delta}}
\newcommand{\bzS}{\boldsymbol{\Sigma}}
\newcommand{\C}{\mathbb{C}}
\newcommand{\CQ}{\mathcal{Q}}
\newcommand{\CS}{\mathcal{S}}
\newcommand{\dn}{{\text{\tiny dn}}}
\newcommand{\hR}{\hat{R}}
\newcommand{\Hm}{\mathbb{H}}
\newcommand{\R}{\mathbb{R}}
\newcommand{\tcb}{\textcolor{blue}}
\newcommand{\tq}{\tilde{q}}
\newcommand{\up}{{\text{\tiny up}}}
\newcommand{\za}{\alpha}
\newcommand{\zf}{\varphi}
\newcommand{\zG}{\Gamma}
\newcommand{\zl}{\lambda}
\newcommand{\zL}{\Lambda}
\newcommand{\zo}{\omega}
\newcommand{\zO}{\Omega}
\newcommand{\zs}{\sigma}
\newcommand{\zS}{\Sigma}
\newtheorem{theorem}{\bf Theorem}
\numberwithin{theorem}{section}
\newtheorem{corollary}[theorem]{\bf Corollary}
\newtheorem{proposition}[theorem]{\bf Proposition}
{\theorembodyfont{\rmfamily}\newtheorem{example}[theorem]{\textbf Example}}
\newenvironment{proof}
{\begin{trivlist}\item[\hskip\labelsep\quad{\bf Proof.}
\hspace{0.5 em}]}{\hfill \rule{0.5em}{0.5em} \end{trivlist}}
\numberwithin{equation}{section}
\author{Francesco Demontis\footnote{Dipartimento di Matematica e Informatica,
Universit\`a di Ca\-glia\-ri, Via Ospedale 72, 09124 Cagliari, Italy. Email:
fdemontis@unica.it, cornelis110553@gmail.com}\,, Cornelis van der
Mee\footnotemark[2]}
\title{Quaternion Algebra Approach to Nonlinear\\ Schr\"odinger Equations with
Nonvanishing\\ Boundary Conditions\tcb{\footnote{LaTeX compilation date and
time: \tcb{\DTMnow}}}}
\begin{document}
\date{}
\maketitle

\begin{abstract}
In this article we apply quaternionic linear algebra and quaternionic linear
system theory to develop the inverse scattering transform theory for the
nonlinear Schr\"o\-din\-ger equation with nonvanishing boundary conditions. We
also determine its soliton solutions by using triplets of quaternionic matrices.
\end{abstract}

\section{Introduction}\label{sec:1}

The initial-value problem for the focusing nonlinear Schr\"o\-din\-ger (NLS)
equation
\begin{equation}\label{1.1}
i\tq_t+\tq_{xx}-2|\tq|^2\tq=0
\end{equation}
with nonvanishing boundary conditions $\tq(x,t)\to\tq_{r,l}(t)$ as
$x\to\pm\infty$, where $\tq_{r,l}(t)=\mu\,e^{-2i\mu^2t+i\theta_{r,l}}$ for
a positive constant $\mu$ and phases $\theta_{r,l}\in\R$, has been abundantly
studied using the inverse scattering transform (IST) technique
\cite{FT,BK,Federica2,BLM}. In \cite{BM} the IST with full account of the
spectral singularities has led to rogue wave solutions of the focusing NLS with
nonvanishing boundary conditions. Throughout this article we study instead of
\eqref{1.1} the NLS-like equation
\begin{equation}\label{1.2}
iq_t+q_{xx}-2|q|^2q+2\mu^2q=0,
\end{equation}
obtained from \eqref{1.1} by applying the gauge transformation
$$\tq(x,t)=e^{-2i\mu^2t}q(x,t),$$
where $q(x,t)$ tends to the time invariant limits
$q_{r,l}=\mu\,e^{i\theta_{r,l}}$ as $t\to\pm\infty$. We also write
$\CQ=\left(\begin{smallmatrix}0&q\\-q^*&0\end{smallmatrix}\right)$ to convert
\eqref{1.2} into the $2\times2$ matrix NLS-like equation
\begin{equation}\label{1.3}
i\zs_3\CQ_t+\CQ_{xx}-2\CQ^3-2\mu^2\CQ=0_{2\times2},
\end{equation}
where $\CQ^\dagger=-\CQ$. Here we write $I_p$ for the identity matrix of order
$p$, $0_{p\times r}$ for the $p\times r$ matrix with zero entries, the dagger
for the complex conjugate matrix transpose, and
$\zs_3=\left(\begin{smallmatrix}1&0\\0&-1\end{smallmatrix}\right)$ for the third
Pauli matrix. The nonlinear Schr\"o\-din\-ger equations have served as
mathematical models for surface waves on deep waters \cite{Abl,ASg,ZS}, signals
along optical fibers \cite{HT,Has,Shaw}, plasma oscillations \cite{Z}, magnetic
spin waves \cite{CTNP,ZP}, and particle states in Bose-Einstein condensates
\cite{PS1,PS2,KFCG}.

In \cite{DM22} a new method to solve the initial-value problem of the matrix
NLS equation by means of the inverse scattering transform technique was
introduced. Instead of determining the time evolution of the scattering data
associated with the Zakharov-Shabat system $v_x=(-ik\zs_3+\CQ)v$ and solving the
Marchenko integral equations associated with the time dependent scattering data
(as in \cite{Federica2}), we determined the time evolution of the
scattering data associated with the matrix Schr\"o\-din\-ger equation
$-\psi_{xx}+\bQ\psi=\zl^2\psi$, where $\bQ=\CQ^2+\CQ_x+\mu^2I_2$ and
$\zl=\sqrt{k^2+\mu^2}$ is the conformal mapping defined for all complex $k$ cut
along $[-i\mu,i\mu]$ and satisfying $\zl\sim k$ at infinity. Since this
conformal mapping $k\mapsto\zl$ is $1,1$ for $k$ in the upper half-plane $\C^+$
cut along $(i0,i\mu]$ and $\zl\in\C^+$, this has led to a great simplification
compared to the treatment based on the Zakharov-Shabat system
$v_x=(-ik\zs_3+\CQ)v$ given in \cite{BK,Federica2,BLM}.

In this article we restrict ourselves to solving the initial-value problem for
the $1+1$ focusing NLS equation. The advantage of this restriction is that the
potential $\bQ$ satisfies the symmetry relation
\begin{equation}\label{1.4}
\bQ^*=\zs_2\bQ\zs_2,
\end{equation}
where the asterisk denotes complex conjugation without transposition and
$\zs_2=\left(\begin{smallmatrix}0&-i\\i&0\end{smallmatrix}\right)$ is the second
Pauli matrix. Using the algebra isomorphism between the algebra $\bzS$ of
complex $2\times2$ matrices $S$ satisfying $S^*=\zs_2S\zs_2$ and the division
ring $\Hm$ of quaternions \cite{Ham3}, we can reduce the resolution of the
Marchenko integral equations to solve the inverse scattering problem for the
matrix Schr\"o\-din\-ger equation $-\psi_{xx}+\bQ\psi=\zl^2\psi$ to calculations
involving quaternions.

In this article we rely significantly on the direct and inverse scattering
theory for the matrix Schr\"o\-din\-ger equation developed
for $\bQ^\dagger=\bQ$, in \cite{AM,AW18,AW20} on the half-line and in
\cite{WK,MO,AKV} on the full line, albeit with some modifications due to the
symmetry relation \eqref{1.4}. For technical reasons we assume throughout this
article that the integral $\int_{-\infty}^\infty dx\,(1+|x|)\|\bQ(x)\|$
converges. For the various applications of the matrix Schr\"o\-din\-ger equation
with selfadjoint potential we refer to \cite{AW20}.

Let us discuss the contents of this article. In Sec.~\ref{sec:2} we review the
direct and inverse scattering theory of the matrix Schr\"o\-din\-ger equation
with symmetry relation \eqref{1.4}, where we essentially rely on the more
general scattering theory given in \cite{DM21,DM22}. In Sec. \ref{sec:3} we
discuss the time evolution of the scattering theory. In Sec. \ref{sec:4} we
discuss matrices having quaternion elements and their isomorphic images of
double matrix order. Here we rely on the seminal monograph on quaternionic
matrices by Rodman \cite{R3}. Section \ref{sec:5} is devoted to the multisoliton
solutions of the AKNS system with nonvanishing boundary conditions parametrized
by choosing minimal triplets of quaternionic matrices. Results on the
invertibility of the Sylvester solutions $\bP_r$ and $\bP_l$ appearing in the
multisoliton solutions are relegated to Appendix~\ref{sec:A}.

\section{Direct and Inverse Scattering}\label{sec:2}

In this article we discuss the direct and inverse scattering theory for the
matrix Schr\"o\-din\-ger equation
\begin{equation}\label{2.1}
-\psi_{xx}+\bQ\psi=\zl^2\psi,
\end{equation}
where the complex $2\times2$ potential $\bQ$ satisfies the symmetry relation
\begin{equation}\label{2.2}
\bQ^*=\zs_2\bQ\zs_2
\end{equation}
and hence belongs to the algebra
$\bzS=\left\{\left(\begin{smallmatrix}S_1&-S_2^*\\S_2&S_1^*\end{smallmatrix}
\right):S_1,S_2\in\C\right\}$. Then this potential $\bQ$ also satisfies the more
restrictive adjoint symmetry relation
\begin{equation}\label{2.3}
\bQ^\dagger=\zs_3\bQ\zs_3,
\end{equation}
where $\zs_3=\left(\begin{smallmatrix}1&0\\0&-1\end{smallmatrix}\right)$ is the
third Pauli matrix. Hence, by virtue of \eqref{2.3}, all of the results on the
direct and inverse scattering theory of \eqref{2.1} developed in
\cite{DM21,DM22} go though in the present situation, although we need to discuss
the impact of the more restrictive symmetry relation \eqref{2.2} on the results
separately.

Let us define the {\it Jost solution from the left} $F_l(x,\zl)$ and the
{\it Jost solution from the right} $F_r(x,\zl)$ as those solutions of the matrix
Schr\"o\-din\-ger equation \eqref{2.1} which satisfy the asymptotic conditions
\begin{subequations}\label{2.4}
\begin{alignat}{3}
 F_l(x,\zl)&=e^{i\zl x}\left[I_2+o(1)\right],&\qquad&x\to+\infty,\label{2.4a}\\
 F_r(x,\zl)&=e^{-i\zl x}\left[I_2+o(1)\right],&\qquad&x\to-\infty.\label{2.4b}
\end{alignat}
\end{subequations}
Calling $m_l(x,\zl)=e^{-i\zl x}F_l(x,\zl)$ and $m_r(x,\zl)=e^{i\zl x}F_r(x,\zl)$
{\it Faddeev functions}, we easily define them as the unique solutions of the
Volterra integral equations
\begin{subequations}\label{2.5}
\begin{align}
m_l(x,\zl)&=I_2+\int_x^\infty dy\,\frac{e^{2i\zl(y-x)}-1}{2i\zl}\bQ(y)
m_l(y,\zl),\label{2.5a}\\
m_r(x,\zl)&=I_2+\int_{-\infty}^x dy\,\frac{e^{2i\zl(x-y)}-1}{2i\zl}\bQ(y)
m_r(y,\zl).\label{2.5b}
\end{align}
\end{subequations}
Then, for each $x\in\R$, $m_l(x,\zl)$ and $m_r(x,\zl)$ are continuous in
$\zl\in\C^+\cup\R$, are analytic in $\zl\in\C^+$, and tend to $I_2$ as
$\zl\to\infty$ from within $\C^+\cup\R$. For $0\neq\zl\in\R$ we can reshuffle
\eqref{2.5} and arrive at the asymptotic relations
\begin{subequations}\label{2.6}
\begin{align}
 F_l(x,\zl)&=e^{i\zl x}A_l(\zl)+e^{-i\zl x}B_l(\zl)+o(1),&\qquad&x\to-\infty,
\label{2.6a}\\
 F_r(x,\zl)&=e^{-i\zl x}A_r(\zl)+e^{i\zl x}B_r(\zl)+o(1),&\qquad&x\to+\infty,
\label{2.6b}
\end{align}
\end{subequations}
where
\begin{subequations}\label{2.7}
\begin{align}
A_{r,l}(\zl)&=I_2-\frac{1}{2i\zl}\int_{-\infty}^\infty dy\,\bQ(y)m_{r,l}(y,\zl),
\label{2.7a}\\
B_{r,l}(\zl)&=\frac{1}{2i\zl}\int_{-\infty}^\infty dy\,
e^{\mp2i\zl y}\bQ(y)m_{r,l}(y,\zl).\label{2.7b}
\end{align}
\end{subequations}
Then $A_{r,l}(\zl)$ is continuous in $0\neq\zl\in\C^+\cup\R$, is analytic in
$\zl\in\C^+$, and tends to $I_2$ as $\zl\to\infty$ from within $\C^+\cup\R$,
while $2i\zl[I_2-A_{r,l}(\zl)]$ has the finite limit
$-\bzD_{r,l}=\int_{-\infty}^\infty dy\,\bQ(y)m_{r,l}(y,\zl)$ as $\zl\to0$ from
within $\C^+\cup\R$. By the same token, $B_{r,l}(\zl)$ is continuous in
$0\neq\zl\in\R$, vanishes as $\zl\to\pm\infty$, and satisfies
$2i\zl B_{r,l}(\zl)\to-\bzD_{r,l}$ as $\zl\to0$ along the real $\zl$-axis.

Using the transformation $F(x,\zl)\mapsto F(x,-\zl^*)^*$ in the matrix
Schr\"o\-din\-ger equation \eqref{2.1}, we easily prove the symmetry relations
\begin{equation}\label{2.8}
 F_l(x,\zl)=\zs_2F_l(x,-\zl^*)^*\zs_2,\qquad
 F_r(x,\zl)=\zs_2F_r(x,-\zl^*)^*\zs_2.
\end{equation}
With the help of \eqref{2.6} we then obtain the symmetry relations
\begin{subequations}\label{2.9}
\begin{alignat}{4}
A_l(\zl)&=\zs_2A_l(-\zl^*)^*\zs_2,&\ A_r(\zl)&=\zs_2A_r(-\zl^*)^*\zs_2,
&\quad&0\neq\zl\in\C^+\cup\R,\label{2.9a}\\
B_l(\zl)&=\zs_2B_l(-\zl)^*\zs_2,&\ B_r(\zl)&=\zs_2B_r(-\zl)^*\zs_2,
&\quad&0\neq\zl\in\R.\label{2.9b}
\end{alignat}
\end{subequations}

Introducing the {\it reflection coefficients}
\begin{equation}\label{2.10}
R_{l,r}(\zl)=B_{l,r}(\zl)A_{l,r}(\zl)^{-1}=-A_{r,l}(\zl)^{-1}B_{r,l}(-\zl),
\end{equation}
we easily obtain the symmetry relations
\begin{equation}\label{2.11}
R_l(\zl)=\zs_2R_l(-\zl)^*\zs_2,\qquad R_r(\zl)=\zs_2R_r(-\zl)^*\zs_2,\quad
0\neq\zl\in\R,
\end{equation}
provided $\det A_{l,r}(\zl)\neq0$.

Above we have defined $\bzD_{l,r}$ as follows:
$$\bzD_{l,r}=\lim_{\zl\to0}\,2i\zl A_{l,r}(\zl)
=\lim_{\zl\to0^\pm}\,2i\zl B_{l,r}(\zl),$$
where the first limit may be taken from the closed upper half-plane. Then the
matrices $\bzD_{l,r}$ have the same determinant. If $\bzD_{l,r}$ is nonsingular,
we are said to be in the {\it generic case}; if instead $\bzD_{l,r}$ is
singular, we are said to be in the {\it exceptional case} (cf. \cite{AKV}). We
are said to be in the {\it superexceptional case} if $\bzD_{l,r}=0_{2\times2}$
and $A_{l,r}(\zl)$ tends to a nonsingular matrix, $A_{l,r}(0)$ say, as $\zl\to0$
from within $\C^+\cup\R$. It is clear that $\bzD_{l,r}=\zs_2\bzD_{l,r}^*\zs_2$.
Throughout this article (as well as in \cite{DM22}) we assume the absence of
{\it spectral singularities}, i.e., the absence of nonzero real $\zl$ for which
$\det A_{l,r}(\zl)=0$. Under this condition the reflection coefficients
$R_{l,r}(\zl)$ are continuous in $0\neq\zl\in\R$. For general potentials $\bQ$
satisfying \eqref{2.2} or \eqref{2.3} there may very well be spectral
singularities (see \cite{KM,BM} for focusing AKNS examples), even though
spectral singularities do not occur if $\bQ^\dagger=\bQ$ \cite{KM,AW20,DM22}.

The Jost solutions allow the triangular representations
\begin{subequations}\label{2.12}
\begin{align}
 F_l(x,\zl)&=e^{i\zl x}I_2+\int_x^\infty dy\,e^{i\zl y}K(x,y),\label{2.12a}\\
 F_r(x,\zl)&=e^{-i\zl x}I_2+\int_{-\infty}^x dy\,e^{-i\zl y}J(x,y),\label{2.12b}
\end{align}
\end{subequations}
where for every $x\in\R$
$$\int_x^\infty dy\,\|K(x,y)\|+\int_{-\infty}^x dy\,\|J(x,y)\|<+\infty.$$
Then the potential $\bQ(x)$ can be found from the auxiliary functions $K(x,y)$
and $J(x,y)$ as follows:
\begin{equation}\label{2.13}
K(x,x)=\frac{1}{2}\int_x^\infty dy\,\bQ(y),\qquad
J(x,x)=\frac{1}{2}\int_{-\infty}^x dy\,\bQ(y).
\end{equation}
Equations \eqref{2.8} and \eqref{2.12} imply the symmetry relations
\begin{equation}\label{2.14}
K(x,y)=\zs_2K(x,y)^*\zs_2,\qquad J(x,y)=\zs_2J(x,y)^*\zs_2.
\end{equation}
Thus the auxiliary functions $K(x,y)$ and $J(x,y)$ belong to the algebra $\bzS$.

Let us write the reflection coefficients in the form
\begin{equation}\label{2.15}
R_l(\zl)=\int_{-\infty}^\infty d\za\,e^{i\zl\za}\hR_l(\za),\qquad
R_r(\zl)=\int_{-\infty}^\infty d\za\,e^{-i\zl\za}\hR_r(\za),
\end{equation}
where $\hR_{l,r}\in L^1(\R)^{2\times2}$. Although this Fourier representation
has only been proved under the absence of spectral singularities assumption and
in the generic case (for $\bQ\in L^1(\R;(1+|x|)dx)^{2\times2}$) and in the
superexceptional case (for $\bQ\in L^1(\R;(1+|x|)^2dx)^{2\times2}$) \cite{DM21},
we assume it to be also true in the most general exceptional case. We then
easily prove the symmetry relations
\begin{equation}\label{2.16}
\hR_l(\za)=\zs_2\hR_l(\za)^*\zs_2,\qquad\hR_r(\za)=\zs_2\hR_r(\za)^*\zs_2.
\end{equation}
Thus the functions $\hR_l(\za)$ and $\hR_r(\za)$ belong to the algebra $\bzS$.

So far we have only discussed the direct scattering problem for \eqref{2.1}. The
inverse scattering problem can be solved by computing one of the auxiliary
functions $K(x,y)$ or $J(x,y)$ as the solutions of one of the Marchenko integral
equations
\begin{subequations}\label{2.17}
\begin{align}
K(x,y)+\zO_r(x+y)&+\int_x^\infty dz\,K(x,z)\zO_r(z+y)=0_{2\times2},
\label{2.17a}\\
J(x,y)+\zO_l(x+y)&+\int_{-\infty}^x dz\,J(x,z)\zO_l(z+y)=0_{2\times2},
\label{2.17b}
\end{align}
\end{subequations}
followed by an application of one of \eqref{2.13}. Here the {\it Marchenko
integral kernels} $\zO_{l,r}(w)$ are given by
\begin{subequations}\label{2.18}
\begin{align}
\zO_r(w)&=\hR_r(w)+\sum_{s=1}^N\,e^{i\zl_sw}N_{r;s},\label{2.18a}\\
\zO_l(w)&=\hR_l(w)+\sum_{s=1}^N\,e^{-i\zl_sw}N_{l;s},\label{2.18b}
\end{align}
\end{subequations}
where we assume the poles $\zl_s$ ($s=1,\ldots,N$) of the {\it transmission
coefficients} $A_{l,r}(\zl)^{-1}$ to be simple; in that case the so-called
{\it norming constants} $N_{r;s}$ and $N_{l;s}$  are defined by
\begin{subequations}\label{2.19}
\begin{align}
 F_r(x,\zl_s)\tau_{r;s}&=iF_l(x,\zl_s)N_{r;s},\label{2.19a}\\
 F_l(x,\zl_s)\tau_{l;s}&=iF_r(x,\zl_s)N_{l;s},\label{2.19b}
\end{align}
\end{subequations}
where $\tau_{r;s}$ and $\tau_{l;s}$ are the residues of $A_r(\zl)^{-1}$ and
$A_l(\zl)^{-1}$ at the simple pole $\zl_s\in\C^+$ ($s=1,\ldots,N$). If there
exist multiple poles of $A_{l,r}(\zl)^{-1}$ in $\C^+$, then the expressions for
$\zO_{r,l}(w)-\hR_{r,l}(w)$ can be derived in a straightforward way as a finite
sum of polynomials times exponentials which obviously are entire analytic
functions of $x$. We can then prove the symmetry relations
\begin{equation}\label{2.20}
\zO_r(w)=\zs_2\zO_r(w)^*\zs_2,\qquad\zO_l(w)=\zs_2\zO_l(w)^*\zs_2.
\end{equation}
Thus the Marchenko kernels $\zO_r(w)$ and $\zO_l(w)$ belong to the algebra
$\bzS$. The proof can be based on (a)~the unique solvability of the Marchenko
equations (for $\zO_{r,l}$ as unknowns with the auxiliary functions assumed to
be known) for large enough $\pm x$, (b)~the symmetry relations \eqref{2.16}, and
(c)~the analyticity of the functions $\zO_{r,l}(w)-\hR_{r,l}(w)$ in $x\in\R$.
We refer to \cite{DM1} for the rather technical details.

\section{Time evolution}\label{sec:3}

Straightforward calculations imply \cite{DM22}
\begin{align}
i\zs_3\bQ_t+\bQ_{xx}-2\CQ\bQ_x&-2\CQ_x\bQ=(i\zs_3\CQ_t+\CQ_{xx}-2\CQ^3
-2\mu^2\CQ)\CQ\nonumber\\&-\CQ(i\zs_3\CQ_t+\CQ_{xx}-2\CQ^3-2\mu^2\CQ)
\nonumber\\&+(i\zs_3\CQ_t+\CQ_{xx}-2\CQ^3-2\mu^2\CQ)_x.\label{6.1}
\end{align}
Thus any solution of the matrix NLS-like equation \eqref{1.3} with nonvanishing
time invariant limits $\CQ_{r,l}$ for $\CQ(x;t)$ as $x\to\pm\infty$ is
a solution of the nonlinear evolution equation
\begin{equation}\label{3.2}
i\zs_3\bQ_t+\bQ_{xx}-2\CQ\bQ_x-2\CQ_x\bQ=0_{2\times2},
\end{equation}
where $\CQ_x=\tfrac{1}{2}(\bQ-\zs_3\bQ\zs_3)$.

The pair of $4\times4$ matrices $(\bX,\bT)$, where
\begin{subequations}\label{3.3}
\begin{align}
\bX(x,t,\zl)&=\begin{pmatrix}0_{2\times2}&I_2\\ \bQ(x;t)-\zl^2I_2&0_{2\times2}
\end{pmatrix},\label{3.3a}\\
\bT(x,t,\zl)&=\begin{pmatrix}i\zs_3(\bQ-2\zl^2I_2)&-2i\zs_3\CQ\\
i\zs_3(\bQ_x-2\CQ\bQ+2\zl^2\CQ)&i\zs_3(\bQ-2\zl^2I_2-2\CQ_x)\end{pmatrix},
\label{3.3b}
\end{align}
\end{subequations}
is an AKNS pair for the nonlinear evolution equation \eqref{3.2} in the sense
that the zero curvature condition
$$\bX_t-\bT_x+\bX\bT-\bT\bX=0_{4\times4}$$
is satisfied iff $\bQ$ satisfies \eqref{3.2} (see \cite{DM22}). Then it is
easily verified that $\bT(x,t,\zl)$ tends to the limits
\begin{equation}\label{3.4}
\bT_{\pm\infty}=\begin{pmatrix}-2i\zl^2\zs_3&-2i\zs_3\CQ_{r,l}\\
2i\zs_3\CQ_{r,l}&-2i\zl^2\zs_3\end{pmatrix}
\end{equation}
as $x\to\pm\infty$. Note that $\det\bT_{\pm\infty}=16(\zl^4+\mu^2)^2$.

 Following \cite{DM22}, we introduce the Jost solutions $\bF_{r,l}(x,\zl;t)$ of
the first order system
$$\begin{pmatrix}V\\V^\prime\end{pmatrix}^\prime=\begin{pmatrix}0_{n\times n}
&I_n\\ \bQ(x)-\zl^2I_n&0_{n\times n}\end{pmatrix}
\begin{pmatrix}V\\V^\prime\end{pmatrix}$$
defined by
$$\bF_l(x,\zl)=\begin{pmatrix}F_l(x,-\zl)&F_l(x,\zl)\\
 F_l^\prime(x,-\zl)&F_l^\prime(x,\zl)\end{pmatrix},\ \bF_r(x,\zl)
=\begin{pmatrix}F_r(x,\zl)&F_r(x,-\zl)\\F_r^\prime(x,\zl)&F_r^\prime(x,-\zl)
\end{pmatrix},$$
where the prime denotes differentiation with respect to $x$. Letting
$\bV(x,\zl;t)$ be a nonsingular $4\times4$ matrix solution of the pair of
first order equations
\begin{equation}\label{3.5}
\bV_x=\bX\bV,\qquad\bV_t=\bT\bV,
\end{equation}
the fact that $\bF_{r,l}(x,\zl;t)$ satisfies the first of \eqref{3.5} implies
the existence of nonsingular matrices $C_{\bF_{r,l}}(\zl;t)$ not depending on
$x$ such that
$$\bF_{r,l}(x,\zl;t)=\bV(x,\zl;t)C_{\bF_{r,l}}(\zl;t)^{-1}.$$ Then a simple
differentiation yields
$$\left[C_{\bF_{r,l}}(\zl;t)\right]_tC_{\bF_{r,l}}(\zl;t)^{-1}
=\bF_{r,l}^{-1}\bT\bF_{r,l}-\bF_{r,l}^{-1}[\bF_{r,l}]_t,$$
where the left-hand side does not depend on $x$ and hence equals the limits of
the right-hand side as $x\to\pm\infty$. Using \eqref{3.4} we easily get
\begin{equation}\label{3.6}
\left[C_{\bF_{r,l}}(\zl;t)\right]_tC_{\bF_{r,l}}(\zl;t)^{-1}
=\begin{pmatrix}-\zL_{r,l}^\up(\zl)&0_{2\times2}\\
0_{2\times2}&-\zL_{r,l}^\dn(\zl)\end{pmatrix},
\end{equation}
where
\begin{subequations}\label{3.7}
\begin{align}
\zL_{r,l}^\up(\zl)&=2i\zl^2\zs_3+2\zl\zs_3\CQ_{r,l},\label{3.7a}\\
\zL_{r,l}^\dn(\zl)&=2i\zl^2\zs_3-2\zl\zs_3\CQ_{r,l},\label{3.7b}
\end{align}
\end{subequations}
are time invariant. Then we easily verify the symmetry relations
\begin{equation}\label{3.8}
\zL_{r,l}^\up(\zl)=\zs_2\zL_{r,l}^\up(-\zl^*)^*\zs_2,\qquad
\zL_{r,l}^\dn(\zl)=\zs_2\zL_{r,l}^\dn(-\zl^*)^*\zs_2.
\end{equation}

Using that
$$\bF_r(x,\zl;t)=\bF_l(x,\zl;t)\bA_r(\zl;t),\qquad
\bF_l(x,\zl;t)=\bF_r(x,\zl;t)\bA_l(\zl;t),$$
where
$$\bA_r(\zl;t)=\begin{pmatrix}A_r(\zl;t)&B_r(-\zl;t)\\B_r(\zl;t)&A_r(-\zl;t)
\end{pmatrix},\ \bA_l(\zl;t)=\begin{pmatrix}A_l(-\zl;t)&B_l(\zl;t)\\B_l(-\zl;t)
&A_l(\zl;t)\end{pmatrix},$$
for $0\neq\zl\in\R$ we easily compute
\begin{equation}\label{3.9}
[\bA_{r,l}]_t=\bA_{r,l}(\zl;t)\begin{pmatrix}\zL^\up_{r,l}(\zl)&0_{2\times2}\\
0_{2\times2}&\zL^\dn_{r,l}(\zl)\end{pmatrix}-\begin{pmatrix}\zL^\up_{l,r}(\zl)
&0_{2\times2}\\0_{2\times2}&\zL^\dn_{l,r}(\zl)\end{pmatrix}\bA_{r,l}(\zl;t).
\end{equation}
Then the reflection coefficients satisfy
\begin{subequations}\label{3.10}
\begin{align}
[R_r]_t&=R_r(\zl;t)\zL^\up_l(\zl)-\zL^\dn_l(\zl)R_r(\zl;t),\label{3.10a}\\
[R_l]_t&=R_l(\zl;t)\zL^\dn_r(\zl)-\zL^\up_r(\zl)R_l(\zl;t).\label{3.10b}
\end{align}
\end{subequations}
Defining $\hR_{r,l}(\za;t)$ by \eqref{2.15}, we easily derive the PDEs
\begin{subequations}\label{3.11}
\begin{align}
[\hR_r]_t&=-2i\left([\hR_r]_{\za\za}\zs_3-\zs_3[\hR_r]_{\za\za}
+[\hR_r]_\za\zs_3\CQ_l-\CQ_l\zs_3[\hR_r]_\za\right),\label{3.11a}\\
[\hR_l]_t&=-2i\left([\hR_l]_{\za\za}\zs_3-\zs_3[\hR_l]_{\za\za}
+[\hR_l]_\za\zs_3\CQ_r-\CQ_r\zs_3[\hR_l]_\za\right),\label{3.11b}
\end{align}
\end{subequations}
provided $\int_{-\infty}^\infty d\za\,(1+\za^2)\|\hat{R}_{r,l}(\za;t)\|$
converges for every $t\in\R$. Using \eqref{3.11} and time evolution properties
of the norming constants \cite[(4.4)]{DM22} we obtain
\begin{subequations}\label{3.12}
\begin{align}
[\zO_r]_t&=-2i\left([\zO_r]_{ww}\zs_3-\zs_3[\zO_r]_{ww}
+[\zO_r]_w\zs_3\CQ_l-\CQ_l\zs_3[\zO_r]_w\right),\label{3.12a}\\
[\zO_l]_t&=-2i\left([\zO_l]_{ww}\zs_3-\zs_3[\zO_l]_{ww}
+[\zO_l]_w\zs_3\CQ_r-\CQ_r\zs_3[\zO_l]_w\right).\label{3.12b}
\end{align}
\end{subequations}
Hence, the reflection kernels $\hR_{r,l}(\za;t)$ and the Marchenko integral
kernels $\zO_{r,l}(w;t)$ satisfy the same PDEs. We have also seen before that
$\hR_{r,l}(\za;t)$ and $\zO_{r,l}(w;t)$ belong to the algebra $\bzS$.

\section{Quaternionic matrix algebra}\label{sec:4}

Let $\bzS$ stand for the (noncommutative) division ring of complex $2\times2$
matrices $S$ satisfying $S^*=\zs_2S\zs_2$. Then it is easily verified \cite{R3}
that $\bzS$ is isomorphic (as a real unital algebra) to the noncommutative
division ring of quaternions $\Hm$ by means of the isomorphism
\begin{align}
S&=\begin{pmatrix}S_1&-S_2^*\\S_2&S_1^*\end{pmatrix}
=(\text{Re}\,S_1)I_2+i(\text{Im}\,S_1)\zs_3-i(\text{Re}\,S_2)\zs_2
+i(\text{Im}\,S_2)\zs_1\nonumber\\
&\stackrel{\zf}{\longrightarrow}
(\text{Re}\,S_1)\bone+(\text{Im}\,S_1)\bi
-(\text{Re}\,S_2)\bj+(\text{Im}\,S_2)\bk,\label{4.1}
\end{align}
where $\{\bone,\bi,\bj,\bk\}$ is the standard quaternion basis. Thus, letting
$\zs_1=\left(\begin{smallmatrix}0&1\\1&0\end{smallmatrix}\right)$ stand for the
first Pauli matrix, we see that $\{I_2,i\zs_3,i\zs_2,i\zs_1\}$ is the basis of
the real vector space $\bzS$ that corresponds to the quaternion basis
$\{\bone,\bi,\bj,\bk\}$ by means of $\zf$. If
$x=a\bone+b\bi+c\bj+d\bk\in\Hm$ for $a,b,c,d\in\R$, then the quaternion squared
length is defined by $|x|^2=a^2+b^2+c^2+d^2$. Thus, for each
$S=\left(\begin{smallmatrix}S_1&-S_2^*\\S_2&S_1^*\end{smallmatrix}\right)
\in\bzS$ we see that $\det S$ coincides with the squared quaternion length of
$\zf(S)$.

The map $\zf$ has a natural extension as a real algebra isomorphism from
$\bzS^{p\times p}$ onto $\Hm^{p\times p}$, the algebras of $p\times p$ matrices
with entries in $\bzS$ and $\Hm$, respectively. For $p\neq r$ there also exists
a natural extension from the real linear subspace $\bzS^{p\times r}$ onto
$\Hm^{p\times r}$.

 For later use we introduce the {\it similarity orbit} of
$S=\left(\begin{smallmatrix}S_1&-S_2^*\\S_2&S_1^*\end{smallmatrix}\right)
\in\bzS$ as the set \cite[Thm.~2.2.6]{R3}
\begin{align}
&\text{Sim}(S)=\left\{X^{-1}SX:0_{2\times2}\neq X\in\bzS\right\}\label{4.2}\\
&=\left\{\begin{pmatrix}T_1&-T_2^*\\T_2&T_1^*\end{pmatrix}:
\text{Re}\,S_1=\text{Re}\,T_1\ \text{and}\ (\text{Im}\,S_1)^2+|S_2|^2
=(\text{Im}\,T_1)^2+|T_2|^2\right\}.\nonumber
\end{align}

\subsection{Determinants and quaternionic linear algebra}\label{sec:4.1}

Since multiplication of quaternion numbers is noncommutative, there is no
obvious way to define the determinant of square quaternion matrices.
 Fortunately, the map $\zf$ allows one to define the determinant of a quaternion
$p\times p$ matrix $M\in\Hm^{p\times p}$ as the determinant of the complex
$2p\times2p$ matrix $\zf^{-1}(M)$ (cf. \cite[Ch.~5]{R3}). For alternative ways
to define determinants of square quaternionic matrices we refer to
\cite{St,Dd,CL} and references therein.

The following theorem has been proved by Rodman \cite[Th.~5.9.2]{R3} using the
quaternionic Jordan normal form. Below we present an independent proof based on
Schur complements (cf. \cite[Sec.~1.7]{Dym} and references therein).

\begin{theorem}\label{th:4.1}
 For $p=1,2,3,\ldots$ the matrices $S\in\bzS^{p\times p}$ have a nonnegative
determinant.
\end{theorem}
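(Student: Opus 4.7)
My plan is to proceed by induction on $p$, using the Schur complement formula for block determinants, as hinted in the statement.

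For the base case $p=1$, a direct $2\times 2$ determinant computation gives
\[
\det S=S_1S_1^{*}-(-S_2^{*})S_2=|S_1|^2+|S_2|^2\geq 0.
\]

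For the inductive step, I would first record the closure properties of $\bzS$ needed to run Schur's formula at the block level. Since $\zs_2^2=I_2$, the symmetry $S^{*}=\zs_2 S\zs_2$ is preserved by sums, real-scalar multiples, matrix products, and (where defined) inverses. Extending entrywise, $\bzS^{q\times r}$ is then closed under matrix addition and, for compatible sizes, under matrix multiplication and (for $q=r$) inversion. Given $S\in\bzS^{p\times p}$, I partition
\[
S=\begin{pmatrix}A&B\\ C&D\end{pmatrix},\qquad A\in\bzS,\ D\in\bzS^{(p-1)\times(p-1)},
\]
and first assume that $A$ is invertible. The Schur complement identity then yields
\[
\det S=\det(A)\,\det\!\bigl(D-CA^{-1}B\bigr).
\]
By the closure properties just recorded, $D-CA^{-1}B\in\bzS^{(p-1)\times(p-1)}$, so the inductive hypothesis gives $\det(D-CA^{-1}B)\geq 0$; combined with $\det A\geq 0$ from the base case, this yields $\det S\geq 0$.

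If $A$ is not invertible, then, since $\bzS$ is a division ring, $A=0_{2\times 2}$. I would dispose of this case by continuity: replacing $A$ by $A+\varepsilon I_2\in\bzS$ produces a perturbed matrix $S_\varepsilon\in\bzS^{p\times p}$ with invertible upper-left block for all small $\varepsilon>0$, to which the previous step applies and gives $\det S_\varepsilon\geq 0$. Letting $\varepsilon\downarrow 0$ and using continuity of the determinant then yields $\det S\geq 0$.

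The main obstacle is the block-level verification that the Schur complement $D-CA^{-1}B$ remains in $\bzS^{(p-1)\times(p-1)}$; this ultimately rests on $\zs_2^2=I_2$ applied entry by entry and is routine. Everything else is a direct combination of Schur's identity, the base-case formula, and continuity of the determinant.
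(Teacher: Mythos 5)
Your proof is correct and follows essentially the same route as the paper: induction on $p$ via the Schur complement of the $(1,1)$ block, using that $\bzS$ is closed under the relevant operations and that $\det A=|S_1|^2+|S_2|^2\ge0$ for $A\in\bzS$. The only difference is the degenerate case of a vanishing pivot block: the paper swaps double rows to bring a nonzero entry of the first block column into pivot position (and observes $\det S=0$ if that whole block column vanishes), whereas you perturb the pivot by $\varepsilon I_2$ and let $\varepsilon\downarrow0$ --- both are valid.
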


\begin{proof}
 For $p=1$ the theorem is obviously true. For $p\ge2$ we define the Schur
complement
\begin{equation}\label{4.3}
\CS=\begin{pmatrix}S_{22}&\ldots&S_{2p}\\ \vdots&&\vdots\\
S_{p2}&\ldots&S_{pp}\end{pmatrix}-\begin{pmatrix}S_{21}\\ \vdots\\S_{p1}
\end{pmatrix}S_{11}^{-1}\begin{pmatrix}S_{12}&\ldots&S_{1p}\end{pmatrix},
\end{equation}
provided $\det(S_{11})=\|S_{11}\|^2>0$. Then
\begin{equation}\label{4.4}
\det(S)=\|S_{11}\|^2\det\CS.
\end{equation}
Under the induction hypothesis that all matrices $S\in\bzS^{(p-1)\times(p-1)}$
have a nonnegative determinant, we see from \eqref{4.4} that any matrix
$S\in\bzS^{p\times p}$ satisfying $\|S_{11}\|>0$ has a nonnegative determinant.
If one of $\|S_{j1}\|>0$, we switch the first and $j$-th double rows without
changing the determinant and repeat the above Schur complement argument to
conclude that $\det(S)\ge0$. If $\|S_{11}\|=\ldots=\|S_{p1}\|=0$, then obviously
$\det(S)=0$.
\end{proof}

\subsection{Jordan normal form and matrix triplets}\label{sec:4.2}

The following theorem can be obtained from \cite[Thm.~5.5.3]{R3} upon
application of $\zf^{-1}$.

\begin{theorem}\label{th:4.2}
 For every $S\in\bzS^{p\times p}$ there exist positive integers $m_1,\ldots,m_k$
adding up to $p$ and matrices $A^{[1]},\ldots,A^{[k]}\in\bzS$ such that $S$ is
similar to the direct sum
\begin{equation}\label{4.5}
J_{m_1}(A^{[1]})\oplus\ldots\oplus J_{m_k}(A^{[k]})
\end{equation}
by means of a similarity transformation belonging to $\bzS^{p\times p}$. The
$\bzS$-Jordan normal form \eqref{4.5} is unique up to changing the order in the
direct sum and replacing the matrices $A^{[1]},\ldots,A^{[k]}$ by matrices in
the same similarity orbit.
\end{theorem}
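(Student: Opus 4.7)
The plan is to leverage the real algebra isomorphism $\zf : \bzS^{p\times p} \to \Hm^{p\times p}$ introduced in Section~\ref{sec:4} to transport Rodman's quaternionic Jordan normal form \cite[Thm.~5.5.3]{R3} back to the setting of $\bzS$-valued matrices. Since $\zf$ is an isomorphism of unital real algebras and carries invertible elements to invertible elements, it transports the notions of similarity, direct sum, and Jordan block in the obvious way.

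First I would take any $S \in \bzS^{p\times p}$ and form $M := \zf(S) \in \Hm^{p\times p}$. Applying Rodman's theorem to $M$ yields positive integers $m_1,\dots,m_k$ with $m_1 + \cdots + m_k = p$, quaternions $\alpha^{[1]},\dots,\alpha^{[k]} \in \Hm$, and an invertible $T \in \Hm^{p\times p}$ such that
\begin{equation*}
T^{-1} M T \;=\; J_{m_1}(\alpha^{[1]}) \oplus \cdots \oplus J_{m_k}(\alpha^{[k]}).
\end{equation*}
Setting $A^{[j]} := \zf^{-1}(\alpha^{[j]}) \in \bzS$ and $X := \zf^{-1}(T) \in \bzS^{p\times p}$, and noting that $\zf^{-1}$ preserves block-diagonal structure as well as the upper-triangular combinatorics that define a Jordan block, one obtains
\begin{equation*}
X^{-1} S X \;=\; J_{m_1}(A^{[1]}) \oplus \cdots \oplus J_{m_k}(A^{[k]}),
\end{equation*}
where $X$ is invertible in $\bzS^{p\times p}$ because $\zf$ takes invertibles to invertibles (equivalently, $\det X \neq 0$ in the sense defined via $\zf$ in Section~\ref{sec:4.1}). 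This establishes existence.

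For uniqueness, I would again push everything through $\zf$: Rodman's theorem asserts uniqueness of the quaternionic Jordan data up to permutation of summands and replacement of each $\alpha^{[j]}$ by a quaternion in its similarity class (i.e.\ the conjugacy orbit $\{\beta^{-1}\alpha^{[j]}\beta : 0\neq\beta\in\Hm\}$). Applying $\zf^{-1}$, this is exactly the statement that each $A^{[j]}$ is determined up to replacement by an element of its similarity orbit $\text{Sim}(A^{[j]})$ as defined in \eqref{4.2}, since $\zf$ intertwines quaternionic conjugation with $\bzS$-conjugation. The main (very mild) obstacle is bookkeeping: one must verify that $\zf$ intertwines the Jordan-block construction, i.e.\ $\zf^{-1}(J_m(\alpha)) = J_m(\zf^{-1}(\alpha))$ in the natural sense that places the $2\times 2$ block $\zf^{-1}(\alpha)$ on the diagonal and $I_2$ on the superdiagonal; this is immediate from the definition of $\zf$ acting entrywise on matrices, and no genuine difficulty arises.
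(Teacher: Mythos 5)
Your proposal is correct and follows essentially the same route as the paper, which simply observes that Theorem~\ref{th:4.2} is obtained from Rodman's quaternionic Jordan form \cite[Thm.~5.5.3]{R3} by applying $\zf^{-1}$. You have merely spelled out the bookkeeping (that $\zf$ intertwines similarity, direct sums, Jordan blocks, and similarity orbits) that the paper leaves implicit.
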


It should be noted that the $\bzS$-Jordan normal form (or: the quaternionic
Jordan normal form discussed at length in \cite{R3}) differs from the usual
complex Jordan normal form. Since
$A=\left(\begin{smallmatrix}A_1&-A_2^*\\A_2&A_1^*\end{smallmatrix}\right)$ is
a diagonalizable $2\times2$ matrix with eigenvalues
$\text{Re}\,A_1\pm i\sqrt{(\text{Im}\,A_1)^2+|A_2|^2}$, the corresponding
complex Jordan normal form is obtained from \eqref{4.5} below as follows:
\begin{itemize}
\item[1.] If $A^{[s]}$ is the diagonal matrix
$\left(\text{Re}\,A^{[s]}_1\right)I_2$, we replace $J_{m_s}(A^{[s]})$ by the
direct sum of Jordan blocks
$J_{m_s}(\text{Re}\,A_1)\oplus J_{m_s}(\text{Re}\,A_1)$.
\item[2.] If $A^{[s]}$ is not a real multiple of $I_2$, we replace
$J_{m_s}(A^{[s]})$ by the direct sum of the Jordan blocks of order $m_s$ at the
complex conjugate eigenvalues
$\text{Re}\,A_1\pm i\sqrt{(\text{Im}\,A_1)^2+|A_2|^2}$.
\end{itemize}

Let $(\bA,\bB,\bC)$ be a triplet consisting of the $p\times p$ matrix $\bA$ with
entries in $\bzS$, the $p\times1$ matrix $\bB$ with entries in $\bzS$, and the
$1\times p$ matrix $\bC$ with entries in $\bzS$. Then this matrix triplet is
called {\it minimal} if the matrix order of $\bA$ is minimal among all triplets
for which $\bC e^{-z\bA}\bB$ is the same $\bzS$-valued function of $z\in\R$.
According to Theorem \ref{th:4.2}, given a minimal triple $(\bA,\bB,\bC)$ of
matrices with entries in $\bzS$ there exists an invertible
$\bS\in\bzS^{p\times p}$ such that $\bS\bA\bS^{-1}$ has the Jordan normal form
\eqref{4.5} and the triplet $(\bS\bA\bS^{-1},\bS\bB,\bC\bS^{-1})$ is minimal.

\begin{theorem}\label{th:4.3}
Suppose $(\bA,\bB,\bC)$ is a triplet of size compatible matrices with entries of
$\bzS$, where the eigenvalues of $\bA$ all have positive real part. Let us
assume that $\bA$ has been brought to the $\bzS$-Jordan normal form
\eqref{4.5}. Then no pair of matrices $A^{[1]},\ldots,A^{[k]}$ belongs to the
same similarity orbit and among the $\bzS$-entries $B^{[j]}_s$ of $\bB$ and
$C^{[j]}_s$ of $\bC$ ($j=1,\ldots,k$, $s=1,\ldots,m_j$ the $\bzS$-entries
$B^{[j]}_{m_j}$ and $C^{[j]}_1$ ($j=1,2,\ldots,k$) are nontrivial matrices.
\end{theorem}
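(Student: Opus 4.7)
I would argue each assertion by contrapositive, constructing from the failure of either condition a $\bzS$-triple of strictly smaller $\bzS$-order that yields the same function $\bC e^{-z\bA}\bB$, contradicting minimality. Throughout, partition $\bB=(B^{[1]\top},\ldots,B^{[k]\top})^\top$ and $\bC=(C^{[1]},\ldots,C^{[k]})$ compatibly with the block-diagonal form \eqref{4.5}, so that the transfer function decouples as $\bC e^{-z\bA}\bB=\sum_{j=1}^k C^{[j]} e^{-zJ_{m_j}(A^{[j]})} B^{[j]}$.

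\textbf{Second conclusion.} Writing $J_{m_j}(A^{[j]})=I_{m_j}\otimes A^{[j]}+N$ with $N$ the standard nilpotent $\bzS$-shift on $\bzS^{m_j}$, one checks by induction that if $B^{[j]}_{m_j}=0_\bzS$, then $J_{m_j}(A^{[j]})^\ell B^{[j]}$ has vanishing last $\bzS$-entry for every $\ell\ge0$; consequently the last entries of $B^{[j]}$ and $C^{[j]}$, together with the last row and column of $J_{m_j}(A^{[j]})$, can be deleted to give a $\bzS$-triple of order $p-1$ reproducing $\bC e^{-z\bA}\bB$. Dually, if $C^{[j]}_1=0_\bzS$, then the first entry of $C^{[j]}J_{m_j}(A^{[j]})^\ell$ vanishes for every $\ell\ge0$, and one truncates the first row and column of $J_{m_j}(A^{[j]})$ and the first entries of $B^{[j]}$ and $C^{[j]}$ instead. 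Either way, minimality is violated.

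\textbf{First conclusion.} Suppose $A^{[i]}$ and $A^{[j]}$ with $i\neq j$ lie in the same $\bzS$-similarity orbit, with $X^{-1}A^{[i]}X=A^{[j]}$ for some invertible $X\in\bzS$. Conjugating the $j$-th Jordan block by the $\bzS$-similarity $I_{m_j}\otimes X^{-1}$ reduces us to $A^{[i]}=A^{[j]}=:A$, with the entries of $B^{[j]}$ and $C^{[j]}$ updated accordingly. By the second conclusion, both $B^{[i]}_{m_i}$ and $B^{[j]}_{m_j}$ are nonzero and hence invertible in the skew field $\bzS\cong\Hm$. A Jordan-chain reduction within the combined $(m_i+m_j)\times(m_i+m_j)$ block — using invertibility of $B^{[i]}_{m_i}$ to construct a $\bzS$-upper-triangular similarity that renders the second chain uncontrollable from $\bB$ — produces a new realization in which the $j$-th block is decoupled from $\bB$ and can therefore be discarded, yielding a $\bzS$-triple of strictly smaller order.

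\textbf{Main obstacle.} The principal subtlety is the first conclusion. Classically, a single-input minimal system has a single Jordan block per eigenvalue, a consequence of the Hautus rank test or Gaussian elimination on the controllability matrix. Adapting this to $\bzS$ requires care about left versus right multiplications in the noncommutative setting, but the fact that $\bzS\cong\Hm$ is a skew field — so every nonzero element is invertible — ensures the elimination goes through. An alternative route is to apply $\zf^{-1}$, invoke classical minimal realization theory in the complex $2p\times2p$ picture, and then verify that the reduction respects the symmetry $S^*=\zs_2S\zs_2$ so that the reduced triple descends back to $\bzS$.
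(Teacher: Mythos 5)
Your handling of the second conclusion is correct and carries the same content as the paper's argument, just in contrapositive form: the paper computes $J_m(A)^n$ explicitly as an upper-triangular Toeplitz matrix and verifies directly that $\bigcap_{n}\text{Ker}\,(\bC J_m(A)^n)=(0)$ precisely when $C_1\neq0_{2\times2}$, with the dual statement for $B_m$ and controllability, whereas you delete an unreachable (resp.\ unobservable) coordinate to contradict minimality. Either route is fine.

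The first conclusion is where the trouble lies, and you should know that the paper's own proof is silent about it --- only the single-Jordan-block computation is given. Your proposed ``Jordan-chain reduction'' fails at exactly the point you flagged as delicate. The decoupling similarity is block triangular with off-diagonal block $X=B^{[j]}_{m_j}\bigl(B^{[i]}_{m_i}\bigr)^{-1}$, and conjugating $J_{m_i}(A)\oplus J_{m_j}(A)$ by it introduces a coupling term $XA-AX$ that vanishes only if $X$ commutes with $A$; in $\Hm$ the centralizer of a non-real $A$ is a proper complex subfield, so generically it does not, and the second chain is never actually severed from $\bB$. This is not merely a gap in the write-up but an obstruction to the statement itself: take
$$\bA=\begin{pmatrix}A&0_{2\times2}\\0_{2\times2}&A\end{pmatrix},\quad
A=\begin{pmatrix}1+i&0\\0&1-i\end{pmatrix},\quad
\bB=\begin{pmatrix}I_2\\ i\zs_2\end{pmatrix},\quad
\bC=\begin{pmatrix}I_2&i\zs_2\end{pmatrix}.$$
All entries lie in $\bzS$, the eigenvalues $1\pm i$ have positive real part, and a direct check gives $\text{Im}(\bA,\bB)=\C^4$ and $\text{Ker}(\bC,\bA)=(0)$; since $\bC e^{-z\bA}\bB=-2ie^{-z}\sin(z)\,\zs_3$ has complex McMillan degree $4$, this triplet is minimal even among complex realizations, yet $A^{[1]}=A^{[2]}$ lie in the same similarity orbit. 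Your fallback of applying $\zf^{-1}$ and citing classical single-input cyclicity does not rescue the argument, because in the complex picture $\bB$ has two columns, so geometric multiplicity two at each eigenvalue is permitted. The distinct-orbit assertion is genuinely true only with extra input --- for instance when $A^{[j]}$ is a real multiple of $I_2$ (two such blocks would force complex geometric multiplicity four), or when the geometric simplicity of the eigenvalues of \eqref{2.1} is imported from the scattering application --- and neither your sketch nor the paper's proof establishes it as stated.
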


\begin{proof}
Consider the matrix triplet $(J_m(A),\bB,\bC)$, where $A\in\bzS$ is not the zero
matrix, $\bB$ is the column with entries $B_1,\ldots,B_m\in\bzS$ and $\bC$ is
the row with entries $C_1,\ldots,C_m\in\bzS$. Then for $n=0,1,2,\ldots$ we get
$$\left[J_m(A)^n\right]_{j,l}=\begin{cases}A^n,&j=l,\\
\binom{n}{l-j}A^{n+j-l},&j<l\le\min(m-1,n+j),\\0_{2\times2},&j>l
\ \text{or}\ l>\min(m-1,n+j),\end{cases}$$
which is an upper triangular Toeplitz matrix with entries in $\bzS$. Letting
$\bX$ be the column with entries $X_1,\ldots,X_m\in\bzS$, the identity
$$\bC J_m(A)^n\bX=0_{2\times2},\qquad n=0,1,\ldots,m-1,$$
allows a solution $\bX$ with $X_1\neq0_{2\times2}$ if $C_1=0_{2\times2}$. Thus
assuming $C_1\neq0_{2\times2}$ in $\bzS$, we get the equality
$$\sum_{j=1}^m\,C_j\left(A^nX_j+\sum_{l=j+1}^{\min(m-1,n+1)}\binom{n}{l-j}
A^{n+j-l}X_l\right)=0_{2\times2},\quad n=0,1,2,\ldots,$$
allowing us to express each $X_j$ into $X_{j+1},\ldots,X_m$ ($j=1,2,\ldots,m-1$)
linearly and to conclude that $X_m=0_{2\times2}$. Thus
$X_1=\ldots=X_m=0_{2\times2}$. In other words, if $C_1\neq0_{2\times2}$, then
$$\bigcap_{n=0}^{m-1}\,\text{Ker}\,(\bC J_m(A)^n)=(0_{2\times2}).$$

In the same way we prove that
$$\bigvee_{n=0}^{m-1}\,\text{Im}\,(J_m(A)^n\bB)=\C^{2m}$$
if $B_m\neq0_{2\times2}$.
\end{proof}

\section{Soliton solutions using matrix triplets}\label{sec:5}

Let us now solve the right and left Marchenko equations \eqref{2.17a} and
\eqref{2.17b} for reflectionless Marchenko kernels \eqref{2.18a} and
\eqref{2.18b}, where the reflection coefficients $R_{r,l}(\zl;t)$ vanish.

\subsection{Minimal matrix triplet representations}\label{sec:5.1}

Since the Marchenko kernels $\zO_{l,r}(w;t)$ are finite linear combinations of
the exponentials $e^{\pm i\zl_sw}$ ($n=1,2,\ldots,N$) and polynomials of $w$
multiplied by such exponentials with time dependent coefficients, there exist
a square matrix $\bA$ of even order $2p$ whose eigenvalues have positive real
parts, $2p\times2$ matrices $\bB_r$ and $\bB_l$, $2\times2p$ matrices $\bC_r$
and $\bC_l$, and a $2p\times2p$ matrix $\bH$ commuting with $\bA$ such that
\begin{equation}\label{5.1}
\zO_r(z,t)=\bC_re^{-z\bA}e^{t\bH}\bB_r,\qquad
\zO_l(z,t)=\bC_le^{z\bA}e^{t\bH}\bB_l.
\end{equation}
The representations \eqref{5.1} are chosen in such a way that the order of the
\text{complex} matrix $\bA$ is minimal among all representations \eqref{5.1} for
the same Marchenko kernels $\zO_r(z,t)$ and $\zO_l(z,t)$. In that case $2p$
coincides with the sum of the algebraic multiplicities of the discrete
eigenvalues in $\C^+$ (which is $N$ if the discrete eigenvalues are
algebraically simple, as assumed so far). Moreover, for any pair of minimal
representations \eqref{5.1} [where the matrices in the second pair carry a prime
or double prime, respectively], there exist unique nonsingular $2p\times2p$
complex matrices $\bS$ and $\overline{\bS}$ such that \cite[Ch.~1]{BGK}
\begin{subequations}\label{5.2}
\begin{alignat}{4}
\bA'&=\bS\bA\bS^{-1},&\quad\bB_r'&=\bS\bB_r,&\quad\bC_r'&=\bC_r\bS^{-1},&\quad
\bH'&=\bS\bH\bS^{-1},\label{5.2a}\\
\bA''&=\overline{\bS}\bA\overline{\bS}^{-1},&\quad\bB_l''&=\overline{\bS}\bB_l,
&\quad\bC_l''&=\bC_l\overline{\bS}^{-1},&\quad\bH''&=\overline{\bS}\bH
\overline{\bS}^{-1}.\label{5.2b}
\end{alignat}
\end{subequations}
In other words, choosing the primed and double primed matrix quadruplets to be
$(\bA^*,\bB_{r,l}^*\zs_2,\zs_2\bC_{r,l}^*,\bH^*)$, the symmetry relations
\eqref{2.20} for the Marchenko kernels imply the existence of unique nonsingular
$2p\times2p$ matrices $\bS$ and $\overline{\bS}$ such that
\begin{subequations}\label{5.3}
\begin{alignat}{4}
\bA^*&=\bS\bA\bS^{-1},&\quad\bB_r^*\zs_2&=\bS\bB_r,&\quad
\zs_2\bC_r^*&=\bC_r\bS^{-1},&\quad\bH^*&=\bS\bH\bS^{-1},\label{5.3a}\\
\bA^*&=\overline{\bS}\bA\overline{\bS}^{-1},
&\quad\bB_l^*\zs_2&=\overline{\bS}\bB_l,&\quad
\zs_2\bC_l^*&=\bC_l\overline{\bS}^{-1},&\quad
\bH^*&=\overline{\bS}\bH\overline{\bS}^{-1}.\label{5.3b}
\end{alignat}
\end{subequations}
Taking complex conjugates we get
\begin{subequations}\label{5.4}
\begin{align}
\begin{cases}\bA^*={\bS^*}^{-1}\bA\bS^*,&\bB_r^*\zs_2=-{\bS^*}^{-1}\bB_r,\\
\zs_2\bC_r^*=-\bC_r\bS^*,&\bH^*={\bS^*}^{-1}\bH\bS^*,\end{cases}\label{5.4a}\\
\begin{cases}\bA^*={\overline{\bS}^*}^{-1}\bA\overline{\bS}^*,
&\bB_l^*\zs_2=-{\overline{\bS}^*}^{-1}\bB_l,\\
\zs_2\bC_l^*=-\bC_l\overline{\bS}^*,
&\bH^*={\overline{\bS}^*}^{-1}\bH\overline{\bS}^*.\end{cases}\label{5.4b}
\end{align}
\end{subequations}
The uniqueness of the similarity transformations $\bS$ and $\overline{\bS}$ then
implies that
\begin{equation}\label{5.5}
\bS^*=-\bS^{-1},\qquad\overline{\bS}^*=-\overline{\bS}^{-1}.
\end{equation}
We observe that the minimal matrix triplets $(\bA_r,\bB_r,\bC_r)$
and $(\bA_l,\bB_l,\bC_l)$ need not consist of matrices having their entries in
$\bzS$, even though the expressions $\bC_re^{-z\bA_r}\bB_r$ and
$\bC_le^{z\bA_l}\bB_l$ belong to $\bzS$ for each $z\in\R$.

Let us now apply a similarity transformation to the triplets
$(\bA_r,\bB_r,\bC_r)$ and $(\bA_l,\bB_l,\bC_l)$ such that the newly found
triplets consist of matrices having their entries in $\bzS$. Indeed, letting
$\bT=\zl^{-1}(\zS_2+\zl^2\bS^*)$ where $\zS_2$ is the direct sum of $p$ copies
of $\zs_2$, $|\zl|=1$, and $\zS_2+\zl^2\bS^*$ is nonsingular, we obtain
$$\bS\bT\zS_2=\zl^{-1}\bS+\zl\bS\bS^*\zS_2=\zl^{-1}\bS-\zl\zS_2
=(\zl^{-1}\zS_2+\zl\bS^*)^*=\bT^*,$$
and hence $\bS=\bT^*\zS_2\bT^{-1}$ (see \cite{ML} for a similar argument
involving the Ansatz $\bS^*=\bS^{-1}$). Substituting the latter into
\eqref{5.4a} we get
$$\begin{cases}(\bT^{-1}\bA\bT)^*=\zS_2(\bT^{-1}\bA\bT)\zS_2,\\
(\bT^{-1}\bB)^*=\zS_2(\bT^{-1}\bB)\zs_2,\\
(\bC\bT)^*=\zs_2(\bC\bT)\zS_2,\end{cases}$$
where we have omitted the subscripts $r$ and $l$. Hence, the matrix triplet
$(\bT^{-1}\bA\bT,\bT^{-1}\bB,\bC\bT)$ consists of matrices having their entries
in $\bzS$. In the same way, by replacing $\bH$ with $\bT^{-1}\bH\bT$ we arrive
at a matrix belonging to $\bzS^{p\times p}$.

Since the Zakharov-Shabat system $v_x=(-ik\zs_3+\CQ)v$ is $1+1$, every discrete
eigenvalue $k_s\in\C^+$ is {\it geometrically} simple. Because the conformal
mapping $k\mapsto\zl=\sqrt{k^2+\mu^2}$ is $1,1$ on $\C^+$ cut along the segment
$(i0^+,i\mu]$, the eigenvalues $\zl_s$ of the matrix Schr\"o\-din\-ger equation
\eqref{2.1} in $\C^+$ are geometrically simple. Thus the matrix
$\bA_{r,l}$ in the minimal representations \eqref{5.1} has a $\bzS$-Jordan
structure with exactly two Jordan blocks of the same order per positive
eigenvalue, one Jordan block per complex eigenvalue with positive real part, and
Jordan blocks of the same order corresponding to complex conjugate eigenvalues
(which have positive real part). As a result, there exist quadruplets
$(\bA_r,\bB_r,\bC_r,\bH_r)$ and $(\bA_l,\bB_l,\bC_l,\bH_l)$ consisting of
matrices having their entries in $\bzS$ such that $\bA_r$ and $\bA_l$ have the
above $\bzS$-Jordan normal form and have minimal matrix order among all
quadruplets leading to the same Marchenko integral kernels \eqref{5.1}.

\subsection{Inverse scattering implemented}\label{sec:5.2}

Let us depart from the representations \eqref{5.1} of the Marchenko integral
kernels, where the quadruplets $(\bA_r,\bB_r,\bC_r,\bH_r)$ and
$(\bA_l,\bB_l,\bC_l,\bH_l)$ consist of matrices having their entries in
$\bzS$ such that $\bA_r$ and $\bA_l$ have the above $\bzS$-Jordan normal form
and have minimal matrix order among all quadruplets leading to the same
Marchenko integral kernels \eqref{5.1}.

Substituting the first of \eqref{5.1} into the right Marchenko equation
\eqref{2.17a}, we obtain using the commutativity of $\bA$ and $\bH$
\begin{align}
K(x,y;t)&=-\left[\bC_re^{-x\bA}+\int_x^\infty ds\,K(x,s;t)\bC_re^{-s\bA}
\right]e^{-y\bA}e^{t\bH}\bB_r\nonumber\\&=-\bW_r(x;t)e^{-y\bA}e^{t\bH}\bB_r,
\label{5.6}
\end{align}
where $\bW_r(x;t)=\bC_re^{-x\bA}-\bW_r(x;t)e^{-x\bA}e^{t\bH}\bP_re^{-x\bA}$ and
\begin{equation}\label{5.7}
\bP_r=\int_0^\infty ds\,e^{-s\bA}\bB_r\bC_re^{-s\bA}
\end{equation}
is the unique solution of the Sylvester equation $\bA\bP_r+\bP_r\bA=\bB_r\bC_r$.
Hence,
\begin{equation}\label{5.8}
K(x,y;t)=-\bC_re^{-x\bA}\left[I_{2p}+e^{-x\bA}e^{t\bH}\bP_re^{-x\bA}
\right]^{-1}e^{-y\bA}e^{t\bH}\bB_r,
\end{equation}
provided the inverse matrix exists. Then Theorem \ref{th:A.1} implies that
$\bP_r$ is invertible. Moreover, Theorem \ref{th:A.3} implies that the inverse
matrix in \eqref{5.8} exists for all but finitely many $x\in\R$. Similarly,
substituting the second of \eqref{5.1} into the left Marchenko equation
\eqref{2.17b}, we obtain
\begin{align}
J(x,y;t)&=-\left[\bC_le^{x\bA}+\int_{-\infty}^x ds\,J(x,s;t)\bC_le^{s\bA}
\right]e^{y\bA}e^{t\bH}\bB_l\nonumber\\&=-\bW_l(x;t)e^{y\bA}e^{t\bH}\bB_l,
\label{5.9}
\end{align}
where $\bW_l(x;t)=\bC_le^{x\bA}-\bW_l(x;t)e^{x\bA}e^{t\bH}\bP_le^{x\bA}$ and
\begin{equation}\label{5.10}
\bP_l=\int_0^\infty ds\,e^{-s\bA}\bB_l\bC_le^{-s\bA}
\end{equation}
is the unique solution of the Sylvester equation $\bA\bP_l+\bP_l\bA=\bB_l\bC_l$.
Then Theorem \ref{th:A.1} implies that $\bP_l$ is invertible. Analogously,
\begin{equation}\label{5.11}
J(x,y;t)=-\bC_le^{x\bA}\left[I_{2p}+e^{x\bA}e^{t\bH}\bP_le^{x\bA}\right]^{-1}
e^{y\bA}e^{t\bH}\bB_l,
\end{equation}
provided the inverse matrix exists. Moreover, Theorem \ref{th:A.3} implies that
the inverse matrix in \eqref{5.11} exists for all but finitely many $x\in\R$.
 Furthermore, $\bP_r$ and $\bP_l$ belong to $\bzS^{p\times p}$.

Using \eqref{2.13} in  \eqref{5.8} and \eqref{5.11} and differentiating with
respect to $x$ we obtain
\begin{subequations}\label{5.12}
\begin{align}
\bQ(x;t)&=-4\bC_r\left[e^{2x\bA}e^{-t\bH}+\bP_r\right]^{-1}\bA e^{2x\bA}
e^{-t\bH}\left[e^{2x\bA}e^{-t\bH}+\bP_r\right]^{-1}\bB_r,\label{5.12a}\\
\bQ(x;t)&=-4\bC_l\left[e^{-2x\bA}e^{-t\bH}+\bP_l\right]^{-1}\bA e^{-2x\bA}
e^{-t\bH}\left[e^{-2x\bA}e^{-t\bH}+\bP_l\right]^{-1}\bB_l.\label{5.12b}
\end{align}
\end{subequations}
Since $\bP_r$ and $\bP_l$ are nonsingular, these expressions are exponentially
decaying as $x\to\pm\infty$. Writing
$\bB_r=\begin{pmatrix}\bB_{r,1}&\bB_{r,2}\end{pmatrix}$ and
$\bC_r=\left(\begin{smallmatrix}\bC_{r,1}\\ \bC_{r,2}\end{smallmatrix}\right)$
and similarly for $\bB_l$ and $\bC_l$, we obtain the following expressions
relating the potentials to the asymptotic potentials $q_r$ and $q_l$ 
\begin{subequations}\label{5.13}
\begin{align}
q(x,t)&=q_r+2\bC_{r,1}\left[e^{2x\bA}e^{-t\bH}+\bP_r\right]^{-1}\bB_{r,2}
\nonumber\\
&=q_l-2\bC_{l,1}\left[e^{-2x\bA}e^{-t\bH}+\bP_l\right]^{-1}\bB_{l,2},
\label{5.13a}\\
q^*(x,t)&=q_r^*+2\bC_{r,2}\left[e^{2x\bA}e^{-t\bH}+\bP_r\right]^{-1}\bB_{r,1}
\nonumber\\
&=q_l^*-2\bC_{l,2}\left[e^{-2x\bA}e^{-t\bH}+\bP_l\right]^{-1}\bB_{r,1},
\label{5.13b}
\end{align}
\end{subequations}
provided $e^{\pm2x\bA}e^{t\bH}+\bP_{r,l}$ (for each $x\in\R$) are nonsingular
matrices. Since $\bP_{r,l}$ are nonsingular, we get
\begin{subequations}\label{5.14}
\begin{alignat}{4}
q_l&=q_r+2\bC_{r,1}\bP_r^{-1}\bB_{r,2},&\qquad
q_r&=q_l-2\bC_{l,1}\bP_l^{-1}\bB_{l,2},\label{5.14a}\\
q_l^*&=q_r^*+2\bC_{r,2}\bP_r^{-1}\bB_{r,1},&\qquad
q_r^*&=q_l^*-2\bC_{l,2}\bP_l^{-1}\bB_{r,1}.\label{5.14b}
\end{alignat}
\end{subequations}
Since $\mu=|q_l|=|q_r|$, the right and left matrix triplets cannot be chosen
arbitrarily. The first of \eqref{5.14a} implies that
$$\bC_{r,1}\bP_r^{-1}\bB_{r,2}=\tfrac{1}{2}\mu(e^{i\theta_l}-e^{i\theta_r}).$$
Since $|e^{i\theta_l}-e^{i\theta_r}|\le2$, we see that the matrix triplet is to
satisfy
\begin{equation}\label{5.15}
|\bC_{r,1}\bP_r^{-1}\bB_{r,2}|
=\mu\left|\sin[\tfrac{1}{2}(\theta_r-\theta_l)]\right|\le\mu,
\end{equation}
where $e^{i(\theta_l-\theta_r)}$ and hence $e^{i\theta_l}$ can be evaluated from
known $\mu$ and $e^{i\theta_r}$. This means that the triplet and $\mu$ are not
independent. Once $\mu$ has been chosen to satisfy
$\mu\ge|\bC_{r,1}\bP_r^{-1}\bB_{r,2}|$, it is possible to determine $\theta_l$
uniquely up to an additive multiple of $2\pi$. Moreover, we have established the following
\begin{proposition}\label{soliton}
If
$0<\mu<|\bC_{r,1}\bP_r^{-1}\bB_{r,2}|$, no soliton solution exists.
\end{proposition}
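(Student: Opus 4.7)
The plan is to argue by contraposition directly from the compatibility relation \eqref{5.14a} that was already derived above the statement. The proposition is, in effect, the contrapositive of the inequality \eqref{5.15}, so no new machinery is required; I only need to make the logical step explicit.

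First, I would assume, for contradiction, that a soliton solution with the given minimal right triplet $(\bA_r,\bB_r,\bC_r)$ and with the parameter $\mu$ exists. By construction of such a solution the asymptotic limits are $q_r=\mu e^{i\theta_r}$ and $q_l=\mu e^{i\theta_l}$ with the same modulus $\mu$, and these must agree with what \eqref{5.14a} forces. Rewriting the first identity of \eqref{5.14a} as
\begin{equation*}
\bC_{r,1}\bP_r^{-1}\bB_{r,2}=\tfrac{1}{2}(q_l-q_r)=\tfrac{1}{2}\mu\bigl(e^{i\theta_l}-e^{i\theta_r}\bigr),
\end{equation*}
I would then take moduli.

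Using the elementary identity $|e^{i\theta_l}-e^{i\theta_r}|=2|\sin[\tfrac{1}{2}(\theta_r-\theta_l)]|\le 2$, the previous display yields
\begin{equation*}
|\bC_{r,1}\bP_r^{-1}\bB_{r,2}|=\mu\bigl|\sin[\tfrac{1}{2}(\theta_r-\theta_l)]\bigr|\le\mu,
\end{equation*}
which is precisely \eqref{5.15}. This contradicts the standing hypothesis $0<\mu<|\bC_{r,1}\bP_r^{-1}\bB_{r,2}|$, and therefore the assumed soliton solution cannot exist.

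The argument is essentially a one-line consequence of \eqref{5.14a} combined with $|q_r|=|q_l|=\mu$, so there is no serious obstacle. The only point that deserves a brief word in the write-up is the justification that for a genuine soliton solution of \eqref{1.2} with nonvanishing boundary conditions the two asymptotic phases are by definition real, so that $e^{i\theta_r},e^{i\theta_l}$ lie on the unit circle and the bound $|e^{i\theta_l}-e^{i\theta_r}|\le 2$ does apply; this is immediate from the very hypothesis that $q_{r,l}=\mu e^{i\theta_{r,l}}$ with $\theta_{r,l}\in\R$ stated in Section~\ref{sec:1}.
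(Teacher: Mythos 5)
Your proposal is correct and follows essentially the same route as the paper: the authors derive \eqref{5.15} from the first identity in \eqref{5.14a} together with $|q_r|=|q_l|=\mu$ and $|e^{i\theta_l}-e^{i\theta_r}|\le 2$, and the Proposition is then stated as the immediate contrapositive of that bound. Your write-up merely makes the contrapositive step explicit, which is exactly what the paper intends by ``we have established the following.''
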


The matrix $\bH$ commuting with $\bA$ is easily seen to be given by
\begin{equation}\label{5.16}
\bH=\frac{1}{2\pi i}\oint_\zG d\zl\,[2i\zl\sqrt{\zl^2-\mu^2}-i\mu^2]
(\zl I_{2p}-i\bA)^{-1},
\end{equation}
where $k(\zl)=\sqrt{\zl^2-\mu^2}$ is the conformal mapping from $\C^+$ onto
$\C^+$ satisfying $k(\zl)\sim\zl$ at infinity and $\zG$ is a closed rectifiable
Jordan contour in the upper half-plane which has winding number $+1$ with
respect to each eigenvalue of $i\bA$. Then
\begin{equation}\label{5.17}
e^{t\bH}=\frac{1}{2\pi i}\oint_\zG d\zl\,e^{i[2\zl\sqrt{\zl^2-\mu^2}-\mu^2]t}
(\zl I_{2p}-i\bA)^{-1}.
\end{equation}

Let us finally derive the expressions for the transmission coefficients.
Substituting \eqref{5.8} into \eqref{2.12a} and \eqref{5.9} into \eqref{2.12b}
we get
\begin{align*}
 F_l(x,\zl;t)&=e^{i\zl x}\left(I_2-i\bC_r\left[e^{2x\bA}e^{-t\bH}+\bP_r
\right]^{-1}(\zl I_{2p}+i\bA)^{-1}\bB_r\right),\\
 F_r(x,\zl;t)&=e^{-i\zl x}\left(I_2-i\bC_l\left[e^{-2x\bA}e^{t\bH}+\bP_l
\right]^{-1}(\zl I_{2p}+i\bA)^{-1}\bB_l\right).
\end{align*}
Dividing by $e^{\pm i\zl x}$, taking the limits of the resulting equalities as
$x\to\mp\infty$, and using \eqref{2.6a} and \eqref{2.6b} we arrive at the
identities
\begin{subequations}\label{5.18}
\begin{align}
A_l(\zl)&=I_2-i\bC_r\bP_r^{-1}(\zl I_{2p}+i\bA)^{-1}\bB_r,\label{5.18a}\\
A_r(\zl)&=I_2-i\bC_l\bP_l^{-1}(\zl I_{2p}+i\bA)^{-1}\bB_l,\label{5.18b}
\end{align}
\end{subequations}
where we have used the nonsingularity of $\bP_{r,l}$. Using the Sylvester
equations for $\bP_{r,l}$ we obtain the transmission coeffients
\begin{subequations}\label{5.19}
\begin{align}
A_l(\zl)^{-1}&=I_2+i\bC_r(\zl I_{2p}-i\bA)^{-1}\bP_r^{-1}\bB_r,\label{5.19a}\\
A_r(\zl)^{-1}&=I_2+i\bC_l(\zl I_{2p}-i\bA)^{-1}\bP_l^{-1}\bB_l.\label{5.19b}
\end{align}
\end{subequations}
Observe that the transmission coefficients are time-invariant. Using the
Sherman-Morrison-Woodbury formula $\det(I-TS)=\det(I-ST)$ [cf.~\cite{GL}] and
the Sylvester equations for $\bP_{r,l}$ we easily obtain
$$\det[A_{l,r}(\zl)^{-1}]=\frac{\det(\zl I_{2p}+i\bA)}{\det(\zl I_{2p}-i\bA)}.$$

\section{Examples}\label{sec:6}

In this section we work out various examples of multisoliton solutions based on
the minimal quadruplet $(\bA,\bB,\bC,\bH)$, where $\bH=\phi(\bA)$ for some
function $\phi$ that is analytic in a neighborhood of the eigenvalues of $\bA$.
In fact [cf. \eqref{5.17}], $\phi(\zl)=i[2\zl\sqrt{\zl^2-\mu^2}-\mu^2]$, where
$k(\zl)=\sqrt{\zl^2-\mu^2}$ is the conformal mapping from $\C^+$ onto $\C^+$
satisfying $k(\zl)\sim\zl$ at infinity.

\begin{example}[one-soliton solution with real eigenvalue]\label{th:6.1}
Consider the minimal triplet
$$\bA=\begin{pmatrix}a&0\\0&a\end{pmatrix},\qquad
\bB=\begin{pmatrix}b_1&-b_2^*\\b_2&b_1^*\end{pmatrix},\qquad
\bC=\begin{pmatrix}c_1&-c_2^*\\c_2&c_1^*\end{pmatrix},$$
where $a>0$ and $\bB$ and $\bC$ have positive determinants. Then
$$\bP=\frac{1}{2a}\bB\bC
=\frac{1}{2a}\begin{pmatrix}d_1&-d_2^*\\d_2&d_1^*\end{pmatrix},$$
where $d_1=b_1c_1-b_2^*c_2$ and $d_2=b_2c_1+b_1^*c_2$. Then \eqref{5.8} implies
that
\begin{align*}
K(x,y;t)&=-e^{-a(x+y)}e^{t\phi(a)}
\begin{pmatrix}c_1&-c_2^*\\c_2&c_1^*\end{pmatrix}\times\nonumber\\
&\times\begin{pmatrix}1+\tfrac{1}{2a}e^{-2ax}e^{t\phi(a)}d_1
&-\tfrac{1}{2a}e^{-2ax}e^{t\phi(a)}d_2^*\\
\tfrac{1}{2a}e^{-2ax}e^{t\phi(a)}d_2
&1+\tfrac{1}{2a}e^{-2ax}e^{t\phi(a)}d_1^*\end{pmatrix}^{-1}
\begin{pmatrix}b_1&-b_2^*\\b_2&b_1^*\end{pmatrix},
\end{align*}
where for any $x\in\R$ the matrix to be inverted has the nonnegative
determinant
$$D(x;t)=\left|1+\tfrac{1}{2a}e^{-2ax}e^{t\phi(a)}d_1\right|^2
+\left|\tfrac{1}{2a}e^{-2ax}e^{t\phi(a)}d_2\right|^2.$$
We assume this determinant to be positive for each $(x,t)\in\R^2$. In fact,
the determinant $D(x;t)$ vanishes at some $x\in\R$ for given $t\in\R$ [namely,
at $x=\tfrac{1}{2a}\ln(-\tfrac{d_1}{2a}e^{t\phi(a)})$] iff $d_1<0$ and $d_2=0$,
i.e., iff $\bB\bC$ is a negative multiple of $I_2$. Therefore,
\begin{align*}
K(x,y;t)&=-\frac{e^{-a(x+y)}e^{t\phi(a)}}{D(x;t)}
\begin{pmatrix}c_1&-c_2^*\\c_2&c_1^*\end{pmatrix}\times\nonumber\\
&\times\begin{pmatrix}1+\tfrac{1}{2a}e^{-2ax}e^{t\phi(a)}d_1^*
&\tfrac{1}{2a}e^{-2ax}e^{t\phi(a)}d_2^*\\
-\tfrac{1}{2a}e^{-2ax}e^{t\phi(a)}d_2
&1+\tfrac{1}{2a}e^{-2ax}e^{t\phi(a)}d_1\end{pmatrix}
\begin{pmatrix}b_1&-b_2^*\\b_2&b_1^*\end{pmatrix}.
\end{align*}
Consequently,
\begin{align*}
q(x)&=q_r+2\frac{e^{-2ax}e^{t\phi(a)}}{D(x;t)}\begin{pmatrix}c_1&-c_2^*
\end{pmatrix}\times\nonumber\\
&\times\begin{pmatrix}1+\tfrac{1}{2a}e^{-2ax}e^{t\phi(a)}d_1^*
&\tfrac{1}{2a}e^{-2ax}e^{t\phi(a)}d_2^*\\
-\tfrac{1}{2a}e^{-2ax}e^{t\phi(a)}d_2
&1+\tfrac{1}{2a}e^{-2ax}e^{t\phi(a)}d_1\end{pmatrix}
\begin{pmatrix}-b_2^*\\b_1^*\end{pmatrix}\nonumber\\
&=q_r+\frac{2}{D(x;t)}\left[-(c_1b_2^*+c_2^*b_1^*)e^{-2ax}e^{t\phi(a)}\right.\\
&+\left.\frac{1}{2a}e^{-4ax}e^{2t\phi(a)}\left(-c_1d_1^*b_2^*+c_1d_2^*b_1^*
-c_2^*d_2b_2^*-c_2^*d_1b_1^*\right)\right].
\end{align*}
Thus,
\begin{align*}
q_l&=q_r+\frac{4a}{|d_1|^2+|d_2|^2}\begin{pmatrix}c_1&-c_2^*\end{pmatrix}
\begin{pmatrix}d_1^*&d_2^*\\-d_2&d_1\end{pmatrix}\begin{pmatrix}-b_2^*\\b_1^*
\end{pmatrix}=q_r.
\end{align*}
Since $\bP=\tfrac{1}{2a}\bB\bC$ with $\bB$ and $\bC$ nonsingular, we see that
$$q_l-q_r=4a\begin{pmatrix}1&0\end{pmatrix}\bC(\bB\bC)^{-1}\bB\begin{pmatrix}
0\\1\end{pmatrix}=0,$$
thus conferming our preceding result.
\end{example}

\begin{example}[one-soliton solution with conjugate eigenvalues]\label{th:6.2}
Consider the minimal triplet
$$\bA=\begin{pmatrix}a&\zo\\-\zo&a\end{pmatrix},\qquad
\bB=\begin{pmatrix}b_1&-b_2^*\\b_2&b_1^*\end{pmatrix},\qquad
\bC=\begin{pmatrix}c_1&-c_2^*\\c_2&c_1^*\end{pmatrix},$$
where $a>0$, $0\neq\,\zo\in\R$, and $\bB$ and $\bC$ have positive
determinants. Then
$$e^{-x\bA}=e^{-ax}\begin{pmatrix}\cos(\zo x)&\sin(\zo x)\\-\sin(\zo x)
&\cos(\zo x)\end{pmatrix}.$$
Using $\int_0^\infty dx\,e^{-2ax}\cos(2\zo x)=\tfrac{a}{2(a^2+\zo^2)}$ and
$\int_0^\infty dx\,e^{-2ax}\sin(2\zo x)=\tfrac{\zo}{2(a^2+\zo^2)}$, we get the
Sylvester solution
$$\bP=\begin{pmatrix}\tfrac{d_1-d_1^*}{4a}+\tfrac{a(d_1+d_1^*)}{4(a^2+\zo^2)}
+\tfrac{\zo(d_2+d_2^*)}{4(a^2+\zo^2)}&\tfrac{d_2-d_2^*}{4a}
-\tfrac{a(d_2+d_2^*)}{4(a^2+\zo^2)}+\tfrac{\zo(d_1+d_1^*)}{4(a^2+\zo^2)}\\
\tfrac{d_2-d_2^*}{4a}+\tfrac{a(d_2+d_2^*)}{4(a^2+\zo^2)}
-\tfrac{\zo(d_1+d_1^*)}{4(a^2+\zo^2)}&-\tfrac{d_1-d_1^*}{4a}
+\tfrac{a(d_1+d_1^*)}{4(a^2+\zo^2)}+\tfrac{\zo(d_2+d_2^*)}{4(a^2+\zo^2)}
\end{pmatrix},$$
where $d_1=b_1c_1-b_2^*c_2$ and $d_2=b_2c_1+b_1^*c_2$. Note that
\begin{align*}
\det\bP&=\left[\frac{a(d_1+d_1^*)}{4(a^2+\zo^2)}
+\frac{\zo(d_2+d_2^*)}{4(a^2+\zo^2)}\right]^2
+\left[\frac{d_1-d_1^*}{4ia}\right]^2\nonumber\\
&+\left[\frac{a(d_2+d_2^*)}{4(a^2+\zo^2)}-\frac{\zo(d_1+d_1^*)}{4(a^2+\zo^2)}
\right]^2+\left[\frac{d_2-d_2^*}{4ia}\right]^2\nonumber\\
&=\frac{(d_1+d_1^*)^2+(d_2+d_2^*)^2}{16(a^2+\zo^2)}
-\frac{(d_1-d_1^*)^2+(d_2-d_2^*)^2}{16a^2}
\end{align*}
is positive. Therefore,
\begin{align*}
q_l&=q_r+\frac{2\begin{pmatrix}c_1&-c_2^*\end{pmatrix}}{\det\bP}\times\\
&\times\begin{pmatrix}-\tfrac{d_1-d_1^*}{4a}+\tfrac{a(d_1+d_1^*)}{4(a^2+\zo^2)}
+\tfrac{\zo(d_2+d_2^*)}{4(a^2+\zo^2)}&-\tfrac{d_2-d_2^*}{4a}
+\tfrac{a(d_2+d_2^*)}{4(a^2+\zo^2)}-\tfrac{\zo(d_1+d_1^*)}{4(a^2+\zo^2)}\\
-\tfrac{d_2-d_2^*}{4a}-\tfrac{a(d_2+d_2^*)}{4(a^2+\zo^2)}
+\tfrac{\zo(d_1+d_1^*)}{4(a^2+\zo^2)}&\tfrac{d_1-d_1^*}{4a}
+\tfrac{a(d_1+d_1^*)}{4(a^2+\zo^2)}+\tfrac{\zo(d_2+d_2^*)}{4(a^2+\zo^2)}
\end{pmatrix}\begin{pmatrix}-b_2^*\\b_1^*\end{pmatrix}.
\end{align*}
\end{example}

\section*{Acknowledgments}
The authors have been partially supported by the Regione Autonoma della Sardegna
research project {\it Algorithms and Models for Imaging Science} [AMIS]
(RASSR57257, intervento finanziato con risorse FSC 2014-2020 -- Patto per lo
Sviluppo della Regione Sardegna), and by INdAM-GNFM.

\appendix

\section{Invertibility of the Sylvester solutions}\label{sec:A}

Given a matrix triplet $(\bA,\bB,\bC)$, where $\bA$ is a $2p\times2p$ matrix
whose eigenvalues have positive real parts, $\bB$ is a $2p\times r$ matrix, and
$\bC$ is an $r\times2p$ matrix, we define the {\it controllability subspace} and
the {\it observability subspace} of $\C^{2p}$ as follows:
\begin{subequations}\label{C.1}
\begin{align}
\text{Im}(\bA,\bB)&=\bigvee_{j=0}^\infty\,\text{Im}(\bA^j\bB),
\label{C.1a}\\
\text{Ker}(\bC,\bA)&=\bigcap_{j=0}^\infty\,\text{Ker}(\bC\bA^j),\label{C.1b}
\end{align}
\end{subequations}
where $\text{Im}\,T$ and $\text{Ker}\,T$ stand for the range and the null space
of a matrix $T$, respectively. The $V$-symbol in \eqref{C.1a} denotes the
set of finite linear combinations of vectors in the union of
$\text{Im}\,(\bA^j\bB)$ ($j=0,1,2,\ldots$) and the intersection in \eqref{C.1b}
is finite. We observe that $\text{Im}(\bA,\bB)$ is the smallest $\bA$-invariant
subspace containing $\text{Im}\,\bB$ and $\text{Ker}(\bC,\bA)$ is the largest
$\bA$-invariant subspace contained in $\text{Ker}\,\bC$. We call the matrix pair
$(\bA,\bB)$ {\it controllable} if $\text{Im}(\bA,\bB)=\C^{2p}$. We call the
matrix pair $(\bC,\bA)$ {\it observable} if $\text{Ker}(\bC,\bA)$ is the zero
subspace. The matrix triplet $(\bA,\bB,\bC)$ is called {\it minimal} if
$(\bA,\bB)$ is controllable and $(\bC,\bA)$ is observable [or: if $\bA$ has
minimal matrix order among the triplets $(\bA,\bB,\bC)$ leading to the same
$\zO(z)=\bC e^{-z\bA}\bB$].
A comprehensive account of controllability and observability can be found in any
textbook on linear control theory \cite{BGK,CZ,HRS}.

 For the above matrix triplets we obviously have in mind $(\bA,\bB_r,\bC_r)$ and
$(\bA,\bB_l,\bC_l)$. In most of this subsection we drop the subscripts $r$ and
$l$ and consider the triplets $(\bA,\bB\bC,I_{2p})$ and $(\bA,I_{2p},\bB\bC)$
with $r=2p$ as well.

The next result relies on arguments provided by Hearon \cite{H} for triplets
$(\bA,\bB,\bC)$ of complex matrices. Here Hearon's arguments are adapted to
matrix triplets $(\bA,\bB,\bC)$, where $\bA\in\bzS^{p\times p}$,
$\bB\in\bzS^{p\times1}$, and $\bC\in\bzS^{1\times p}$.

\begin{theorem}\label{th:A.1}
Let $(\bA,\bB,\bC)$ be a matrix triplet, where $\bA\in\bzS^{p\times p}$ only has
eigenvalues with positive real part, $\bB\in\bzS^{p\times1}$, and
$\bC\in\bzS^{1\times p}$. Then the following statements are equivalent:
\begin{itemize}
\item[$(a)$] The unique solution $\bP$ of the Sylvester equation
\begin{equation}\label{A.2}
\bA\bP+\bP\bA=\bB\bC
\end{equation}
is invertible.
\item[$(b)$] The pair $(\bA,\bB\bC)$ is controllable.
\item[$(c)$] The pair $(\bB\bC,\bA)$ is observable.
\end{itemize}
\end{theorem}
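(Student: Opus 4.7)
The plan is to adapt Hearon's classical Lyapunov-type argument \cite{H} to the $\bzS$-valued setting, anchored on the integral representation
$$\bP=\int_0^\infty e^{-s\bA}\bB\bC\,e^{-s\bA}\,ds,$$
which converges absolutely since $\sigma(\bA)\subset\{\text{Re}\,\zl>0\}$ and manifestly takes values in $\bzS^{p\times p}$ because the integrand does. From this formula, the implication (a) $\Rightarrow$ (c) is almost immediate: for $v\in\bigcap_{j\ge0}\text{Ker}(\bB\bC\bA^j)$, left-cancellation by the nonzero column $\bB$ in the division ring $\bzS$ gives $\bC\bA^jv=0$ for every $j\ge0$, so the power-series expansion of the exponential yields $\bC e^{-s\bA}v\equiv0$ and hence
$$\bP v=\int_0^\infty e^{-s\bA}\bB\,(\bC e^{-s\bA}v)\,ds=0,$$
forcing $v=0$. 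The dual argument applied to $\bP^\dagger$, which satisfies $\bA^\dagger\bP^\dagger+\bP^\dagger\bA^\dagger=(\bB\bC)^\dagger$, combined with the standard duality between $\text{Im}(\bA,\bB\bC)$ and $\text{Ker}((\bB\bC)^\dagger,\bA^\dagger)$, yields (a) $\Rightarrow$ (b).

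For the converse I argue by contrapositive: starting from $\bP v=0$, iteration of the Sylvester relation produces
$$\bP\bA^nv=\sum_{k=0}^{n-1}(-1)^k\bA^k\bB\bC\bA^{n-1-k}v,\qquad n\ge1,$$
and a straightforward upward induction (reading off the lowest-order term at each step) shows that the largest $\bA$-invariant subspace contained in $\text{Ker}\,\bP$ equals $\bigcap_{j\ge0}\text{Ker}(\bB\bC\bA^j)=\text{Ker}(\bB\bC,\bA)$. The problem is thereby reduced to proving that whenever $\text{Ker}\,\bP\neq\{0\}$, it necessarily contains a nontrivial $\bA$-invariant subspace, for then (c) is violated.

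This last reduction is the hard step, and it is precisely where the $\bzS$-structure must enter, since over $\C$ alone a nonzero subspace need not harbour any nontrivial $\bA$-invariant subspace. The point is that $\bP\in\bzS^{p\times p}$ makes $\text{Ker}\,\bP\subset\C^{2p}$ a right $\bzS$-submodule of $\bzS^{p\times1}\cong\Hm^p$, and $\bA$ acts right-$\Hm$-linearly on $\Hm^p$. Applying Theorem \ref{th:4.2} to reduce $\bA$ to its $\bzS$-Jordan form, every quaternionic Jordan block supplies a right-eigenvector $w$ of $\bA$ (i.e.\ $\bA w=w\zl$ for some $\zl\in\bzS$), and descending to a minimal $\bA$-invariant quaternionic subspace that meets $\text{Ker}\,\bP$ nontrivially produces such an eigenvector inside $\text{Ker}\,\bP$. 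The cyclic submodule $w\bzS$ is automatically $\bA$-invariant, since $\bA(w\alpha)=w(\zl\alpha)\in w\bzS$ for every $\alpha\in\bzS$, and it is contained in $\text{Ker}\,\bP$; this gives the required nonzero element of $\text{Ker}(\bB\bC,\bA)$ and contradicts (c). The symmetric cokernel argument, with the roles of $\bB$ and $\bC$ interchanged and $\bP$ replaced by $\bP^\dagger$, then gives (b) $\Rightarrow$ (a), closing the cycle. The main obstacle throughout is this final invariant-subspace extraction, whose validity rests squarely on the quaternionic Jordan structure of $\bA$.
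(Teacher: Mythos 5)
Your forward direction is sound: deriving (a)$\Rightarrow$(c) from the integral formula $\bP=\int_0^\infty e^{-s\bA}\bB\bC e^{-s\bA}\,ds$ by cancelling the nonzero (hence invertible) $\bzS$-entry of $\bB$ is correct, and it is in fact shorter than the paper's argument, which instead writes $\bA=\bB\bC\bP^{-1}-\bP\bA\bP^{-1}$, iterates, and invokes Cayley--Hamilton together with the disjointness of the spectra of $\bA$ and $-\bA$. Your identification of the largest $\bA$-invariant subspace of $\text{Ker}\,\bP$ with $\text{Ker}(\bB\bC,\bA)$ is also correct.

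The converse, however, breaks at precisely the step you call the hard one. It is false that a nonzero quaternionic subspace $\text{Ker}\,\bP$ must contain a right-eigenvector of $\bA$ (equivalently, a nontrivial $\bA$-invariant $\Hm$-submodule), and no amount of quaternionic Jordan theory will produce one. Concretely, take $p=2$, $\bA=\left(\begin{smallmatrix}I_2&0_{2\times2}\\0_{2\times2}&2I_2\end{smallmatrix}\right)$, $\bB=\left(\begin{smallmatrix}I_2\\0_{2\times2}\end{smallmatrix}\right)$, $\bC=\begin{pmatrix}I_2&I_2\end{pmatrix}$. Then $\bP=\left(\begin{smallmatrix}\tfrac{1}{2}I_2&\tfrac{1}{3}I_2\\0_{2\times2}&0_{2\times2}\end{smallmatrix}\right)$ is singular, its kernel $\{(2u,-3u):u\in\C^2\}$ is a nonzero quaternionic subspace containing no nontrivial $\bA$-invariant subspace (every right-eigenvector of $\bA$ lies in $\C^2\oplus0$ or $0\oplus\C^2$), and yet $(\bB\bC,\bA)$ is observable. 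So the implication you are aiming at --- observability alone forcing invertibility --- cannot be established by any route; what the paper's own case analysis actually delivers is that controllability and observability \emph{together} force invertibility. The mechanism you are missing is the $\bzS$-rank-one structure of $\bB\bC$ fed back into the Sylvester equation: either $\bC\bq=0_{2\times2}$ for every $\bq\in\text{Ker}\,\bP$, in which case $\bP\bA\bq=\bB\bC\bq-\bA\bP\bq=0$ shows $\text{Ker}\,\bP$ is $\bA$-invariant and observability annihilates it; or some $\bq\in\text{Ker}\,\bP$ has $\bC\bq\neq0_{2\times2}$, in which case $\bC\bq$ is invertible in the division ring $\bzS$, so $\bB=\bP\bA\bq(\bC\bq)^{-1}\in\text{Im}\,\bP$, hence $\text{Im}\,\bP$ is $\bA$-invariant and contains $\text{Im}(\bB\bC)$, and controllability forces $\text{Im}\,\bP=\C^{2p}$. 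Your write-up never exploits the invertibility of $\bC\bq$ in $\bzS$ and never passes to the range of $\bP$, so the converse does not close.
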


\begin{proof}
Let us first prove that $\text{Im}\,(\bB\bC)$ is contained in $\text{Im}\,\bP$
iff $\text{Im}\,\bP$ is $\bA$-invariant. Indeed, if $\text{Im}\,\bP$ is
$\bA$-invariant, then for each $\bh\in\C^{2p}$ there exists $\bk\in\C^{2p}$ such
that $\bA\bP\bh=\bP\bk$; then, using \eqref{A.2}, we get
$\bB\bC\bh=\bP(\bk+\bA\bh)$, thus proving that $\text{Im}\,(\bB\bC)$ is
contained in $\text{Im}\,\bP$. Conversely, if $\text{Im}\,(\bB\bC)$ is contained
in $\text{Im}\,\bP$, then for each $\bh\in\C^{2p}$ there exists $\bk\in\C^{2p}$
such that $\bB\bC\bh=\bP\bk$; then, using \eqref{A.2}, we get
$\bA\bP\bh=\bP(\bk-\bA\bh)$, thus proving that $\text{Im}\,\bP$ is
$\bA$-invariant.

Next, we prove that $\text{Ker}\,(\bB\bC)$ contains $\text{Ker}\,\bP$ iff
$\text{Ker}\,\bP$ is $\bA$-invariant. Indeed, if $\text{Ker}\,\bP$ is contained
in $\text{Ker}\,(\bB\bC)$, then for each $\bh\in\C^{2p}$ such that $\bP\bh=0$
we have $\bB\bC\bh=0$, which implies that $\bP\bA\bh=\bB\bC\bh-\bA\bP\bh=0$.
Conversely, if $\text{Ker}\,\bP$ is $\bA$-invariant, then for each
$\bh\in\C^{2p}$ such that $\bP\bh=0$, we have $\bP\bA\bh=0$ and hence
$\bB\bC\bh=\bA\bP\bh+\bP\bA\bh=0$.

(b)$\Rightarrow$(a). Let $\bq$ be a $p\times1$ matrix with entries in $\bzS$
such that $\bP\bq=0$. Then $\bP\bA\bq=\bB\bC\bq$. Then there are two options:
\begin{itemize}
\item[(i)] $\bC\bq=0$ whenever $\bP\bq=0$, or
\item[(ii)] $\bC\bq\neq0$ for some $\bq$ satisfying $\bP\bq=0$.
\end{itemize}
In the first case, we see that $\bP\bA\bq=\bB\bC\bq-\bA\bP\bq=0$ and hence the
kernel of $\bP$ is $\bA$-invariant. If we then also assume that $(\bB\bC,\bA)$
is observable, then the $\text{Im}\,\bP$ contains the smallest $\bA$-invariant
subspace containing $\text{Im}\,(\bB\bC)$ and hence the controllability of
$(\bA,\bB\bC)$ implies that $\bP$ is invertible. In the second case we see that
the $\bzS$-vector $\bB$ belongs to the range of $\bP$, implying that the range
of $\bB\bC$ is contained in the range of $\bP$ so that the range of $\bP$ is
$\bA$-invariant. If we then also assume that $(\bA,\bB\bC)$ is controllable and
hence the smallest $\bA$-invariant subspace containing the range of $\bB\bC$ is
all of $\bzS^{p\times1}$, then $\bP$ is invertible. In either case we conclude
that $\bP$ is invertible.

(c)$\Rightarrow$(a). Using the arguments of the preceding paragraph, we see that
the controllability of $(\bA^\dagger,\bC^\dagger\bB^\dagger)$ implies the
invertibility of $\bP^\dagger$.

(a)$\Rightarrow$[(b)+(c)]~Let us first assume $\bP$ to be invertible. To
prove the controllability of the pair $(\bA,\bB\bC)$, we take a vector
$\bh\in\C^{2p}$ orthogonal to the controllability subspace
$\text{Im}\,(\bA,\bB\bC)$. Then
$$(\bA^j\bB\bC\bP^{-1}\bk,\bh)=0,\qquad\bk\in\C^{2p},\quad j=0,1,2,\ldots.$$
Therefore, using the identity [cf. \eqref{A.2}]
\begin{equation}\label{A.3}
\bA=\bB\bC\bP^{-1}-\bP\bA\bP^{-1},
\end{equation}
for arbitrary $\bk\in\C^{2p}$ and $j=0,1,2,\ldots$ we have
\begin{align*}
(\bA^{j+1}\bk,\bh)&=(\bA^j\bB\bC\bP^{-1}\bk,\bh)-(\bA^j\bP\bA\bP^{-1}\bk,\bh)\\
&=-(\bA^j\bP\bA\bP^{-1}\bk,\bh).
\end{align*}
By the arbitrariness of $\bk$ we get
\begin{equation}\label{A.4}
{\bA^\dagger}^{j+1}\bh
=-{\bP^\dagger}^{-1}\bA^\dagger\bP^\dagger{\bA^\dagger}^j\bh.
\end{equation}
Repeated application of \eqref{A.4} yields
\begin{align*}
{\bA^\dagger}^j\bh&=-{\bP^\dagger}^{-1}\bA^\dagger\bP^\dagger{\bA^\dagger}^{j-1}
\bh=({\bP^\dagger}^{-1}\bA^\dagger\bP^\dagger)^2{\bA^\dagger}^{j-2}\bh\\
&=\ldots=(-{\bP^\dagger}^{-1}\bA^\dagger\bP^\dagger)^j\bh
={\bP^\dagger}^{-1}(-\bA^\dagger)^j\bP^\dagger\bh,
\end{align*}
which implies that
$$q(\bA^\dagger)\bh={\bP^\dagger}^{-1}q(-\bA^\dagger)\bP^\dagger\bh$$
for any polynomial $q(z)$. If we take $q(z)=\det(zI_{2p}-\bA^\dagger)$ [the
characteristic polynomial of $\bA^\dagger$], we obtain $q(\bA^\dagger)=0$ by the
Cayley-Hamilton theorem \cite{G}. Using that $\bA^\dagger$ and $-\bA^\dagger$ do
not have common eigenvalues [and hence $q(z)$ and $q(-z)$ do not have common
zeros], we obtain the invertibility of $q(-\bA^\dagger)$. Consequently, $\bh=0$.
As a result, $\text{Im}\,(\bA,\bB\bC)=\C^{2p}$, yielding the controllability of
the pair $(\bA,\bB\bC)$. Finally, using the invertibility of $\bP^\dagger$,
we prove the controllability of the pair $(\bA^\dagger,\bC^\dagger\bB^\dagger)$
and hence the observability of the pair $(\bB\bC,\bA)$.
\end{proof}

\begin{corollary}\label{th:A.2}
The matrices $\bP_r$ defined by {\rm\eqref{5.7}} and $\bP_l$ defined by
{\rm\eqref{5.10}} are invertible.
\end{corollary}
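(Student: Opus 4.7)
The plan is to apply Theorem~\ref{th:A.1} separately to the two triplets $(\bA,\bB_r,\bC_r)$ and $(\bA,\bB_l,\bC_l)$ constructed at the end of Section~\ref{sec:5.1}. First I would check the standing hypothesis of Theorem~\ref{th:A.1}: the eigenvalues of $\bA$ are $-i\zl_s$ for the discrete eigenvalues $\zl_s\in\C^+$, hence each has positive real part equal to $\mathrm{Im}\,\zl_s>0$.

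Next I would recall that minimality of the quadruplet $(\bA,\bB_{r,l},\bC_{r,l},\bH)$ was built into the construction: $2p$ is the smallest matrix order that still represents the Marchenko kernels via \eqref{5.1}. By the standard realization theory recalled in \cite[Ch.~1]{BGK}, this minimality is equivalent to $\bzS$-controllability of the pair $(\bA,\bB_{r,l})$ together with $\bzS$-observability of the pair $(\bC_{r,l},\bA)$.

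The heart of the argument is to upgrade these two properties into condition (b) [equivalently (c)] of Theorem~\ref{th:A.1}, namely controllability of $(\bA,\bB_{r,l}\bC_{r,l})$. I would place $\bA$ in its $\bzS$-Jordan normal form \eqref{4.5} by Theorem~\ref{th:4.2}, and then invoke Theorem~\ref{th:4.3}: on each Jordan chain the leading $\bzS$-entry $C^{[j]}_1$ of $\bC_{r,l}$ and the trailing $\bzS$-entry $B^{[j]}_{m_j}$ of $\bB_{r,l}$ are nonzero in the division ring $\bzS$, so each product $B^{[j]}_{m_j}C^{[j]}_1$ is also nonzero in $\bzS$. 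A block-by-block analysis on the Jordan chains, mimicking the argument already used in the proof of Theorem~\ref{th:4.3}, then shows that $\bigvee_{n\ge 0}\mathrm{Im}(\bA^n\bB_{r,l}\bC_{r,l})=\C^{2p}$, which is the controllability required.

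The main obstacle is precisely this upgrade step, because forming the lower-rank product $\bB\bC$ generally shrinks the controllability subspace, so minimality of $(\bA,\bB,\bC)$ alone is not automatically enough. The resolution will lean on two features provided by Theorem~\ref{th:4.3}: the non-vanishing of the outermost $\bzS$-entries of $\bB$ and $\bC$, and the fact that distinct Jordan blocks of $\bA$ belong to pairwise disjoint $\bzS$-similarity orbits, so that spectral projections separate the blocks and no proper $\bA$-invariant subspace of $\C^{2p}$ can contain $\mathrm{Im}(\bB_{r,l}\bC_{r,l})$. Once condition (b) of Theorem~\ref{th:A.1} is verified for each of the two triplets, the invertibility of $\bP_r$ and of $\bP_l$ follows immediately.
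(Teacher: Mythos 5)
Your overall plan is the paper's (implicit) proof: the corollary is stated without argument because it is meant to follow at once from Theorem~\ref{th:A.1} applied to the minimal triplets $(\bA,\bB_{r},\bC_{r})$ and $(\bA,\bB_{l},\bC_{l})$, whose $\bA$ has only eigenvalues with positive real part. Where you diverge is in the ``upgrade'' from controllability of $(\bA,\bB)$ and observability of $(\bC,\bA)$ to condition (b) of Theorem~\ref{th:A.1}: you treat this as the main obstacle and propose to resolve it by passing to the $\bzS$-Jordan form and redoing a block-by-block argument in the spirit of Theorem~\ref{th:4.3}. That detour is unnecessary, and as written it is only a sketch (the ``block-by-block analysis \dots mimicking the argument'' is never carried out). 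The point you are missing is that in the setting of Theorem~\ref{th:A.1} the factors $\bB\in\bzS^{p\times1}$ and $\bC\in\bzS^{1\times p}$ are a single $\bzS$-column and a single $\bzS$-row over the division ring $\bzS$: any nonzero entry of $\bC$ is an invertible $2\times2$ complex matrix, so the complex $2\times2p$ matrix $\bC$ is surjective onto $\C^{2}$, whence $\mathrm{Im}(\bA^{n}\bB\bC)=\bA^{n}\bB(\mathrm{Im}\,\bC)=\mathrm{Im}(\bA^{n}\bB)$ for every $n$, and $\mathrm{Im}(\bA,\bB\bC)=\mathrm{Im}(\bA,\bB)=\C^{2p}$ directly from controllability of $(\bA,\bB)$. (Dually, $\bB\neq0$ is injective, so observability of $(\bC,\bA)$ passes to $(\bB\bC,\bA)$.) The ``rank shrinkage'' you worry about can only occur when the inner dimension exceeds one; here it cannot, which is precisely why Hearon's hypotheses are formulated for column--row factorizations. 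So your proof is salvageable and lands in the right place, but the genuinely needed step is a one-line rank observation rather than the Jordan-form machinery of Theorems~\ref{th:4.2} and~\ref{th:4.3}, which the paper invokes for the structure of $\bA_{r,l}$ in Section~\ref{sec:5.1}, not for this corollary.
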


\begin{theorem}\label{th:A.3}
 For each $x\in\R$ except at finitely many values, the matrices
$e^{2x\bA}+\bP_r$ and $e^{-2x\bA}+\bP_l$ are invertible.
\end{theorem}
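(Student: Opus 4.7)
My plan is to reduce invertibility of $e^{2x\bA}+\bP_r$ to the nonvanishing of its determinant, treated as an entire function of the complex variable $x$, and then to confine its real zeros to a compact interval by asymptotic analysis at $\pm\infty$. Concretely, set $g_r(x):=\det(e^{2x\bA}+\bP_r)$. Since $e^{2x\bA}$ is an entire matrix-valued function of $x$ (its entries are linear combinations of terms $x^{k}e^{2\lambda_j x}$, with $\lambda_j$ running through the spectrum of $\bA$) and the determinant is a polynomial in the matrix entries, $g_r$ extends to an entire function on $\C$. Provided $g_r\not\equiv 0$, its zeros accumulate nowhere, so only finitely many lie in any compact subset of $\R$.

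The first key step is to check $g_r\not\equiv 0$ and that its real zeros are bounded below. Because every eigenvalue of $\bA$ has strictly positive real part, $e^{2x\bA}\to 0_{2p\times 2p}$ as $x\to-\infty$ along $\R$, hence $g_r(x)\to\det(\bP_r)$, which is nonzero by Corollary~\ref{th:A.2}. The second key step is to bound the real zeros from above by factoring
$$g_r(x)=\det(e^{2x\bA})\,\det\!\left(I_{2p}+e^{-2x\bA}\bP_r\right)=e^{2x\,\text{tr}(\bA)}\,\det\!\left(I_{2p}+e^{-2x\bA}\bP_r\right),$$
and observing that $\text{Re}\,\text{tr}(\bA)>0$ forces the first factor to blow up in modulus, while $e^{-2x\bA}\to 0$ makes the second factor tend to $1$; therefore $|g_r(x)|\to\infty$ as $x\to+\infty$. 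Combining the two asymptotic bounds, the real zeros of $g_r$ lie in a compact interval, and by the isolated-zero property of entire functions they are finite in number. The same argument, with the roles of $\pm\infty$ interchanged and $\bP_r$ replaced by $\bP_l$, handles $\det(e^{-2x\bA}+\bP_l)$.

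I do not anticipate a substantial obstacle; the argument is a standard exponential-polynomial compactness principle. The only point demanding a bit of vigilance is the asymptotic factorization above, which is legitimate because $e^{\pm 2x\bA}$ is invertible for every $x\in\R$, and the positivity of $\text{Re}\,\text{tr}(\bA)$, inherited from the positive real parts of all eigenvalues of $\bA$, is essential both for forcing $|g_r(x)|\to\infty$ as $x\to+\infty$ and for the limit $g_r(x)\to\det(\bP_r)$ as $x\to-\infty$.
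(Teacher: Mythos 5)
Your proof is correct and follows essentially the same route as the paper's: regard the determinant as an entire function of $x$, show it tends to the nonzero value $\det\bP_{r}$ (respectively $\det\bP_l$) at one end of the real line and blows up at the other, so that its real zeros are confined to a compact interval, and then invoke the isolated-zeros property of a nontrivial entire function. The only cosmetic difference is that the paper additionally cites the nonnegativity of the determinant from Theorem~\ref{th:4.1} to phrase the blow-up as ``tends to $+\infty$,'' whereas you bypass that input by estimating $|g_r(x)|$ through the factorization $\det(e^{2x\bA})\det\left(I_{2p}+e^{-2x\bA}\bP_r\right)$.
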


Example~\ref{th:6.1} contains a triplet where $\det(e^{2x\bA}+\bP_r)=0$ for
some $x\in\R$.

\begin{proof}
In Theorem \ref{th:4.1} above we have proved the nonnegativity of the
determinants of $\bP_{r,l}$ and $e^{\pm2x\bA}e^{-t\bH}+\bP_{r,l}$ for each
$x\in\R$. Since for each $t\in\R$ the function
$\det(e^{\pm2x\bA}e^{-t\bH}+\bP_{r,l})$ is entire analytic in $x$, is
nonnegative on the real $x$-line, tends to $+\infty$ as $x\to\pm\infty$ along
the real line, and tends to $\det\bP_{r,l}>0$ as $x\to\mp\infty$, there are at
most finitely values of $x\in\R$ for which the matrix
$e^{\pm2x\bA}e^{-t\bH}+\bP_{r,l}$ is singular.
\end{proof}

\end{document}